\newcommand{\@abbrev}[3]{
	\def\c@a@def##1{
		\if ##1.
		\relax
		\else
		\@ifdefinable{\@nameuse{#1##1}}{\@namedef{#1##1}{#2##1}}
		\expandafter\c@a@def
		\fi
	}
	\c@a@def #3.
}
\newcommand{\FPC}{\mbox{\sc FPC}}
\newcommand{\CPT}{\mbox{\sc CPT}}
\newcommand{\classmonPCx}[1]{\ensuremath{\mathbf{\mathcal{K}_{\monPCx 
				k}}}}
\newcommand{\G}{\ensuremath{\mathfrak{G}}}
\newcommand{\Aut}{\mathbf{Aut}}
\renewcommand{\phi}{\varphi}
\newcommand{\lra}{\longrightarrow}
\renewcommand{\phi}{\varphi}
\renewcommand{\theta}{\vartheta}
\renewcommand{\epsilon}{\varepsilon}
\newcommand{\EPC}{\mbox{\sc EPC}}
\newcommand{\PC}{\textrm {PC}}
\newcommand{\MC}{\textrm{MC}}
\newcommand{\DWL}{\mbox{\sc DWL}}
\newcommand{\WL}{\textrm {WL}}
\newcommand{\ptime}{\mbox{\sc Ptime}}
\newcommand{\Pairs}{\text{\upshape \textbf{NewPairs}}}
\newcommand{\Cc}{{\cal C}}
\newcommand{\Kk}{{\cal K}}
\newcommand{\Oo}{{\cal O}}
\newcommand{\Pp}{{\cal P}}
\newcommand{\Ss}{{\cal S}}
\newcommand{\Vv}{{\cal V}}
\renewcommand{\bar}{\overline}
\renewcommand{\G}{G} %TODO: DECIDE IF GROUPS SHOULD BE GAMMA OR G by default
\newcommand{\pairOp }{\texttt{addPair}}
\newcommand{\sccOp }{\texttt{contract}}
\newcommand{\createOp }{\texttt{create}}
\newcommand{\forgetOp }{\texttt{forget}}
\newcommand{\dom}{\operatorname{dom}}
\newcommand{\op}{\operatorname{op}}
\newcommand{\scc}{{\operatorname{scc}}}
\newcommand{\sccs}{{\operatorname{SCC}}}
\newcommand{\pair}{{\operatorname{pair}}}
\newcommand{\diag}{{\operatorname{diag}}}
\newcommand{\Pair}{\textsf{\upshape Pair}}
\newcommand{\Union}{\textsf{\upshape Union}}
\newcommand{\cutout}[1]{}
\newcommand{\Piso}{P_{\text{iso}}}
\title{ Finite Model Theory and Proof Complexity revisited: Distinguishing graphs in Choiceless Polynomial Time and the Extended Polynomial Calculus   } %TODO Please add
\titlerunning{ Distinguishing graphs in CPT and the Extended Polynomial Calculus      }  %TODO optional, please use if title is longer than one line
\author{Benedikt Pago}{Mathematical Foundations of Computer Science, RWTH Aachen University, Germany }{pago@logic.rwth-aachen.de}{}{}%TODO mandatory, please use full name; only 1 author per \author macro; first two parameters are mandatory, other parameters can be empty. Please provide at least the name of the affiliation and the country. The full address is optional
\keywords{finite model theory, proof complexity, graph isomorphism} 
\authorrunning{B. Pago} %TODO mandatory. First: Use abbreviated first/middle names. Second (only in severe cases): Use first author plus 'et al.'
\begin{document}
	
	\maketitle
	
	\begin{abstract}
		This paper extends prior work on the connections between logics from finite model theory and propositional/algebraic proof systems. We show that if all non-isomorphic graphs in a given graph class can be distinguished in the logic \emph{Choiceless Polynomial Time} with counting (CPT), then they can also be distinguished in the \emph{bounded-degree extended polynomial calculus} (EPC), and the refutations have roughly the same size as the resource consumption of the CPT-sentence. 
		This allows to transfer lower bounds for EPC to CPT and thus constitutes a new potential approach towards better understanding the limits of CPT. A super-polynomial EPC lower bound for a $\ptime$-instance of the graph isomorphism problem would separate CPT from $\ptime$ and thus solve a major open question in finite model theory.\\ 
		Further, using our result, we provide a model theoretic proof for the separation of bounded-degree polynomial calculus and bounded-degree \emph{extended} polynomial calculus.
	\end{abstract}

	\section{Introduction and results}
	In recent years, a close connection between propositional proof complexity and finite model theory has been discovered and investigated -- this is fruitful in particular because it allows the transfer of lower bounds between the two fields. 
	In \cite{berkholz}, Berkholz and Grohe showed that, with respect to the \emph{graph isomorphism problem}, \emph{fixed-point logic with counting} (FPC) has the same expressive power as the \emph{bounded-degree monomial calculus}. In \cite{ggpp} it was shown more generally that there are mutual simulations between different variants of fixed-point logic and resolution/monomial calculus, not only for the graph isomorphism problem, but for deciding any classes of finite structures. These simulations preserve the relevant complexity parameters: The number of variables in a fixed-point sentence is reflected in the width/degree of the corresponding resolution/monomial calculus refutation, and vice versa. Therefore, known lower bounds for these respective parameters can be transferred between proof complexity and finite model theory. We extend this line of research from the rather well-understood fixed-point logics to a stronger model of computation, \emph{Choiceless Polynomial Time} (CPT). This logic is an extension of FPC with a mechanism to construct (isomorphism-invariant) higher-order objects, i.e.\ nested sets, over the input structure. This power to create new objects puts CPT in a realm beyond formalisms with bounded variable number, bounded width, or bounded degree. As it turns out, a proof system that can naturally simulate this mechanism is the bounded-degree \emph{polynomial calculus with extension axioms} over $\bbQ$. \emph{Extension axioms} can be added to any proof system; they allow to introduce new variables in a proof as abbreviations for more complex expressions, which generally allows for considerably shorter proofs.\\
	We consider a similar setting as in \cite{berkholz}, that is, we compare the logic and the proof system with respect to their power to \emph{distinguish non-isomorphic graphs}. We say that CPT distinguishes all graphs in a graph class $\Kk$ if there exists a polynomial resource bound $p(n)$ such that for all pairs of non-isomorphic graphs $G,H \in \Kk$, there exists a CPT-sentence with time and space bound $p(n)$ that evaluates to true in one of the graphs and false in the other one (\autoref{def:distinguishingInCPT}). A proof system distinguishes $G$ and $H$ if it can refute the statement ``$G$ and $H$ are isomorphic'', encoded in a natural way as a propositional formula/system of polynomial equations $\Piso(G,H)$ (\autoref{def:PC_isoSystem}). Our main result reads as follows:
	\begin{theorem}
		\label{thm:PC_mainInformal}
		Let $\Kk$ be a class of graphs such that $\CPT$ distinguishes all graphs in $\Kk$. Then the degree-$3$ extended polynomial calculus over $\bbQ$ (denoted $\EPC_3$) distinguishes all graphs in $\Kk$ with refutations of polynomial size.\\
		Moreover, the $\EPC_3$-refutation uses only extension axioms $\frac{}{X_f - f}$ for polynomials of the form $f = X \cdot Y$ or $f = \frac{1}{n} \cdot \Big(\sum_{i=1}^{n^2} X_i \Big)$. That is, only monomials and certain ``averaged sums'' are replaced with new variables.
	\end{theorem}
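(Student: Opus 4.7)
The plan is to construct, from a distinguishing $\CPT$-sentence $\phi$ with polynomial resource bound $p(n)$, a polynomial-size $\EPC_3$-refutation of $\Piso(G,H)$ for each non-isomorphic pair $G, H \in \Kk$. At a high level, the simulation associates to every hereditarily finite set that $\phi$ constructs during its computation a fresh extension variable, and translates the step-by-step run of the CPT program into polynomial derivations. The variables of $\Piso(G,H)$ are $X_{uv}$ for $u \in V(G)$ and $v \in V(H)$, together with the natural axioms forcing a bijection and edge-preservation; the CPT computation is viewed as acting symbolically on a generic bijection represented by the matrix $(X_{uv})$, treating each $X_{uv}$ as the indicator that $\pi(u) = v$ for a hypothetical isomorphism $\pi$.

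For each step of the CPT computation I would introduce extension variables using the two allowed forms. Pair and product constructions $s = \{s_1, s_2\}$ (and, more generally, membership checks over previously-built objects) yield extension variables $Z_s$ defined by the monomial axiom $Z_s = Z_{s_1} \cdot Z_{s_2}$, encoding combinations of already-represented objects. Iterated set comprehensions, the main mechanism by which CPT generates objects symmetrically, yield extension variables defined by the averaged-sum axiom $Z_s = \frac{1}{n}\sum_{i=1}^{n^2} Z_{t_i}$, where the $t_i$ range over the orbit of the construction under the symbolic bijection. Since CPT is isomorphism-invariant and has polynomial resource bound $p(n)$, only polynomially many distinct extension variables ever need to be introduced, and each $Z_s$ is well-defined from the $X_{uv}$ together with previously-introduced extension variables.

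The sentence $\phi$ evaluates to distinct Boolean values on $G$ and $H$, so its symbolic interpretation through the chain of extension variables yields a polynomial $Z_\phi$ that, modulo the $\Piso$-axioms, must simultaneously equal $0$ and $1$; this gives the required $\EPC_3$-derivation of a contradiction. Throughout, the degree is held at $3$ by ``flattening'' each intermediate polynomial into a single extension variable before the next multiplication step: each allowed extension axiom has degree at most $2$, so as long as at most one further multiplication is performed between successive extension steps, all lines of the refutation stay within degree $3$.

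The main obstacle will be the faithful translation of CPT's higher-order set-theoretic operations -- especially iterated set comprehension with the isomorphism-invariance constraint -- into the language of extension axioms over $\bbQ$. Concretely, one must show that the averaged-sum axioms $\frac{1}{n}\sum_{i=1}^{n^2} X_i$ exactly capture the way CPT aggregates over orbits (with the division reflecting symmetric cardinalities), and that this aggregation can be performed within polynomial size, so that the total refutation length is controlled by $p(n)$. A secondary challenge is verifying that the $\Piso$-axioms suffice to prove the ``commutation'' identities between the extension variables and the symbolic isomorphism, which is ultimately what produces a contradiction whenever $G \not\cong H$.
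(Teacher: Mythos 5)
Your proposal correctly anticipates the two roles of the restricted extension axioms (products for pair/tuple formation, averaged sums for symmetric aggregation), but it has a genuine gap at its core: the final step, where you claim the ``symbolic interpretation'' of the sentence yields a polynomial $Z_\phi$ that must equal both $0$ and $1$, is precisely the part that does not work as stated. Evaluating the Boolean formulas of a CPT program ($\Pi_{\text{halt}}$, $\Pi_{\text{out}}$, and the comprehension guards) symbolically would require encoding negations and disjunctions, and the restricted extension axioms $X_f - f$ with $f = X\cdot Y$ or $f = \frac{1}{n}\sum_i X_i$ cannot express $1-X$ or $X+Y-XY$; the paper explicitly observes that only AND-gates are simulable in this fragment. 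The paper avoids this entirely by never evaluating the sentence: it passes to the Deep Weisfeiler Leman model (equivalent to CPT), tracks only the sequence of cloud operations $\pairOp(R)$ and $\sccOp(R)$ applied to $G\uplus H$, and shows (the key technical lemma) that each such operation lifts the axiom system $\Piso(G_i,H_i)$ to $\Piso(G_{i+1},H_{i+1})$ inside $\EPC_3$. The contradiction is then obtained at the end by an external tool: once the algebraic sketches of the two components differ, $2$-WL distinguishes them, Spoiler wins the bijective $3$-pebble game, and the Berkholz--Grohe simulation of that game in the degree-$3$ monomial calculus derives the $1$-polynomial from the final system.

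A second, related gap is in your bookkeeping of extension variables. You introduce one variable $Z_s$ per hereditarily finite set $s$ constructed by the run, but the variables of $\Piso$ are indexed by \emph{pairs} of vertices, one from each graph, encoding ``$v$ maps to $w$''. To say anything about non-isomorphism you must maintain, at every stage, a full isomorphism system between the two \emph{extended} structures, i.e.\ variables $X_{vw}$ for every new object $v$ on the $G$-side paired with every new object $w$ on the $H$-side, together with derivations of the lifted bijection axioms and local-isomorphism axioms. Those derivations are the bulk of the work and rest on structural facts about coarsest coherent configurations (endpoints of a colour class lie in fixed diagonal colours, all $R$-SCCs have equal size, colours between SCCs determine the presence of edges), combined with the pebble-game lemma to cancel monomials corresponding to Spoiler-winning positions. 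None of this machinery appears in your outline, and without it the ``commutation identities'' you defer to at the end cannot be established.
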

	This has two main consequences. Most importantly, it establishes a new potential approach for the difficult open problem of proving strong lower bounds for CPT. A central topic in finite model theory is the quest for a logic that captures $\ptime$ (see \cite{chandra1980structure}, \cite{grohe2008quest}, \cite{gurevich1985logic}, \cite{pakusa2015linear}). At the moment, CPT is arguably the most prominent candidate logic for this, after rank logic has been ruled out \cite{lichter}.
	That is, evaluating any fixed CPT-sentence in a given input structure is in \ptime, and as of yet, no decision problem in $\ptime$ is known that cannot be defined by a CPT-sentence. However, the isomorphism-invariance of CPT is a severe limitation. Intuitively, it means that every classical algorithm involving choices or ordered iterations, such as e.g.\ Gaussian elimination, has to be executed in parallel for all possible orderings of the input structure, at least if it is implemented in the naive way in CPT. This requires exponential space and time resources. Thus, CPT can only be equal to P if there exists some clever trick that allows to simulate ordered iterations in a symmetry-invariant way. 
	One quite well-studied problem that is conjectured to be hard for CPT is solving linear equation systems over finite fields -- particularly hard instances of this problem arise as encodings of the isomorphism problem of \emph{Cai-Fürer-Immerman graphs} \cite{caifurimm92} or \emph{multipedes} \cite{multipedes}. Thus, if CPT does not capture \ptime, then it is quite likely that the graph isomorphism problem on a suitable graph class is a witness for that. Unfortunately, only few techniques for proving limitations of CPT are known (essentially the symmetry-based ones employed in \cite{dawar2008}, \cite{rossman2010choiceless}, \cite{pago}).
 	\autoref{thm:PC_mainInformal} opens up a new perspective, as it enables us to transfer proof-theoretic lower bounds to CPT:
	\begin{theorem}
		\label{thm:PC_mainLowerBound}
		If there is a class $\Kk$ of graphs on which the isomorphism problem is in \ptime, but which cannot be distinguished in $\EPC_3$ with polynomial-size refutations using only extension axioms of the form mentioned in \autoref{thm:PC_mainInformal}, then $\CPT \neq \ptime$.
	\end{theorem}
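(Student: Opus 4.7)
The plan is to prove the theorem by contraposition. I will assume $\CPT = \ptime$ in addition to the hypothesis that graph isomorphism is in $\ptime$ on $\Kk$, and derive that $\EPC_3$ distinguishes every non-isomorphic pair in $\Kk$ by polynomial-size refutations of $\Piso(G,H)$ using only the restricted extension axioms of \autoref{thm:PC_mainInformal}. This contradicts the hypothesis of the present theorem, forcing $\CPT \neq \ptime$.

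First I will verify that under the assumption $\CPT = \ptime$ the class $\Kk$ is distinguished by $\CPT$ in the sense of \autoref{def:distinguishingInCPT}. The predicate ``$G \cong H$'' for $G,H \in \Kk$ is a single fixed polynomial-time problem on pair-encodings of graphs, so the assumption provides one $\CPT$-sentence $\psi$ that decides it with a uniform polynomial resource bound $p(n)$. Specializing $\psi$ to a fixed left argument $G \in \Kk$ produces, for each such $G$, a $\CPT$-sentence $\phi_G$ in the graph vocabulary that evaluates to $\true$ on exactly the graphs isomorphic to $G$ and whose resource bound remains polynomial in $|G|$. Hence, for any non-isomorphic pair $G,H \in \Kk$, the sentence $\phi_G$ separates $G$ from $H$ within the uniform polynomial bound, establishing $\CPT$-distinguishability of $\Kk$.

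The second step is a direct invocation of \autoref{thm:PC_mainInformal} on $\Kk$: it yields polynomial-size $\EPC_3$-refutations of $\Piso(G,H)$ for every non-isomorphic $G,H \in \Kk$, and these refutations use only extension axioms of the form $\tfrac{}{X_f - f}$ with $f = X \cdot Y$ or $f = \tfrac{1}{n}\sum_{i=1}^{n^2} X_i$. This is exactly the configuration that the present theorem's hypothesis excludes, so the contradiction closes the argument.

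I expect the only mildly delicate point to be the passage in the first step from a single decision procedure for pairs to a family $\{\phi_G\}_{G \in \Kk}$ sharing a \emph{uniform} polynomial resource bound, rather than a bound that silently depends on $G$. Once that uniformity has been made precise, the remainder of the proof is essentially a citation of \autoref{thm:PC_mainInformal}.
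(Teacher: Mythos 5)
Your overall skeleton -- contraposition plus a direct invocation of \autoref{thm:PC_mainInformal} -- is exactly the intended derivation, and you correctly identify the one step that needs care: converting ``$\CPT = \ptime$ and isomorphism on $\Kk$ is in $\ptime$'' into ``$\CPT$ distinguishes all graphs in $\Kk$'' in the precise sense of \autoref{def:distinguishingInCPT}. The problem is that your resolution of this step, ``specializing $\psi$ to a fixed left argument $G$,'' is not a defined operation on $\CPT$-sentences, and the obvious way to implement it breaks the definition you need to satisfy. A $\CPT$-sentence is evaluated on a single unordered structure; to turn a pair-decider $\psi$ (which runs on $G \uplus H$) into a sentence $\phi_G$ over the single-graph vocabulary, you must somehow embed $G$ itself into the sentence. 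The naive embedding -- a formula describing $G$ up to isomorphism -- costs $|V(G)|$ variables, and \autoref{def:distinguishingInCPT} explicitly demands a \emph{constant} bound $k$ on the number of variables, uniformly over all pairs in $\Kk$. The paper even remarks that without this clause the definition trivializes (every graph is first-order describable with enough variables), so your $\phi_G$ as described is precisely the degenerate object the definition is built to exclude. Uniformity of the polynomial bound, which you flag as the delicate point, is actually the easier half; the variable count is the real obstruction.

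There are two ways to close the gap. One is to make the specialization honest: hard-code an \emph{ordered} copy of $G$ as a hereditarily finite set over $\emptyset$ (a ground BGS-term of size polynomial in $|G|$ needs no extra variables), and then run $\psi$ via an interpretation on the virtual disjoint union of this set-encoded copy and the actual input structure. This keeps the variable count constant (the variables of $\psi$ plus $O(1)$), but it is a nontrivial construction, and you must also address that the resource bound of $\phi_G$ evaluated on $H$ is polynomial in $|G|+|H|$ rather than in $|H|$ alone. The cleaner route bypasses \autoref{def:distinguishingInCPT} entirely: $\CPT = \ptime$ gives a single $\CPT$-sentence deciding the isomorphism problem on inputs $G \uplus H$, hence (by the $\CPT$-to-DWL simulation and the normalisation result of \cite{dwl}) a normalised polynomial-time DWL-algorithm which, on non-isomorphic $G, H \in \Kk$, must at some point produce components with distinct algebraic sketches -- which is exactly the hypothesis of \autoref{lem:PC_transferToProofComplexity}, the lemma that actually manufactures the $\EPC_3$-refutation of $\Piso(G,H)$. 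Either way, the step you dismiss as ``essentially a citation'' is fine once this is done, but as written the proof has a genuine hole at the specialization.
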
 
	Interesting candidate graph classes are the said CFI-graphs or multipedes; crucially, without any order relation, because on certain ordered versions of these graphs, the isomorphism problem is already known to be in CPT (\cite{dawar2008}, \cite{multipedeIsomorphism}). Recently, a super-polynomial lower bound for $\EPC$ was found \cite{alekseev}. It concerns the \emph{bit-value principle} and is based on the bit-complexity of the coefficients required for a refutation. It might be a starting point in the search for graph isomorphism lower bounds, even though it seems that its proof is not directly adaptable to this problem. A second consequence of \autoref{thm:PC_mainInformal}, together with known results from finite model theory and proof complexity (\cite{dawar2008}, \cite{berkholz}), is the separation of the bounded-degree polynomial calculus and $\EPC_3$:
	\begin{theorem}
		\label{thm:PC_extensionStronger}
		There exists a sequence of pairs of non-isomorphic graphs $(G_n,H_n)_{n \in \bbN}$ such that $\Piso(G_n,H_n)$ has a polynomial-size refutation in the degree-$3$ extended polynomial calculus (using only extension axioms of the aforementioned form) but there is no $k \in \bbN$ such that the degree-$k$ polynomial calculus can refute $\Piso(G_n,H_n)$ for all $n$.
	\end{theorem}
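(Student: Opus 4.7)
The plan is to take $(G_n, H_n)$ to be CFI-graphs over ordered base graphs of unbounded treewidth (e.g.\ ordered 3-regular expanders), the classical witnesses separating $\CPT$ from $\FPC$. The upper bound and the lower bound in the theorem correspond to the two sides of this separation.

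For the $\EPC_3$ upper bound, I invoke the result of Dawar, Richerby and Rossman \cite{dawar2008}, which shows that $\CPT$ defines the graph isomorphism problem on CFI-graphs over any ordered base graph within polynomial time and space. Hence $\CPT$ distinguishes every non-isomorphic pair in the chosen class, and \autoref{thm:PC_mainInformal} directly yields polynomial-size $\EPC_3$-refutations of $\Piso(G_n, H_n)$ using only the restricted extension axioms listed there.

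For the polynomial-calculus lower bound, I use the correspondence between fixed-point logic with counting and the bounded-degree polynomial calculus established (for graph isomorphism) by Berkholz and Grohe \cite{berkholz}: the minimal degree needed to refute $\Piso(G, H)$ equals, up to a constant, the Weisfeiler-Leman dimension at which $G$ and $H$ are distinguished. By the classical Cai-F\"urer-Immerman result \cite{caifurimm92}, for every $k$ there is a base graph whose two CFI-variants are not distinguished by $k$-WL, and taking the base graphs of $(G_n, H_n)$ of unbounded treewidth therefore prevents any fixed degree $k$ from sufficing for all~$n$.

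The main obstacle is to verify that the $k$-WL lower bound for CFI-graphs indeed survives the addition of the linear order on the base structure, since this order, encoded as a vertex colouring in $\Piso$, is visible to Weisfeiler-Leman. This is well-known in finite model theory --- it is precisely why ordered CFI-graphs were isolated by Dawar et al.\ as separating examples between $\CPT$ and $\FPC$ --- and can be argued via the bijective pebble game: the duplicator's winning strategy on the unordered CFI-graphs lifts to the ordered case, because the ordering only refines the initial colouring without breaking the gadget automorphisms responsible for indistinguishability. With this in hand, \autoref{thm:PC_mainInformal} together with \cite{berkholz} gives the desired separation between $\EPC_3$ and bounded-degree $\PC$.
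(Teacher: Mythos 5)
Your upper-bound half matches the paper: take CFI-graphs over ordered base graphs, use \cite{dawar2008} to get CPT-distinguishability, and apply \autoref{thm:PC_mainInformal}. The lower-bound half, however, has a genuine gap. You derive the polynomial-calculus degree lower bound from the Weisfeiler-Leman dimension lower bound via ``the correspondence between fixed-point logic with counting and the bounded-degree polynomial calculus''. But the exact correspondence in \cite{berkholz} (\autoref{thm:PC_mainGroheBerkholz}) is between WL dimension and the \emph{monomial} calculus, not the full polynomial calculus. Only one direction transfers to $\PC$ for free (a degree-$k$ MC refutation is in particular a degree-$k$ PC refutation); the converse implication ``not distinguished by $k$-WL $\Rightarrow$ no degree-$O(k)$ $\PC$ refutation'' is false in general --- over $\mathrm{GF}(2)$, for instance, CFI instances admit constant-degree $\PC$ refutations despite requiring linear WL dimension. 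The linear degree lower bound for $\PC$ over $\bbQ$ on CFI-graphs is a separate, substantially harder result (Theorem 6.2 in \cite{berkholz}), which is what the paper invokes; your argument as written does not reach it.

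On the secondary issue of whether the lower bound survives the ordering: your proposed route (lifting Duplicator's strategy to the ordered graphs) again only addresses the WL/MC lower bound, not the $\PC$ one. The paper sidesteps this entirely by observing that the instances in Theorem 6.2 of \cite{berkholz} are already gadget-wise coloured, so that adding a preorder $\preceq$ induced by a linear order on the base graph does not change the same-colour relation $\sim$ and hence leaves the axiom system $\Piso$ literally identical: $\Piso(G_n,\widetilde{G}_n) = \Piso((G_n,\preceq),(\widetilde{G}_n,\preceq))$. The $\PC$ lower bound therefore applies verbatim to the ordered instances, with no transfer argument needed. If you replace your WL-based step by a direct citation of that dedicated $\PC$ lower bound and note the colour-class observation, your proof closes.
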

	To our knowledge, the separation of these two bounded-degree proof systems has not explicitly been stated before -- specifically for the graph isomorphism problem. In the unbounded-degree setting, an exponential separation between $\PC$ and $\EPC$ is known, even if one only allows extension axioms of the form $\bar{X} = 1 - X$, as in \emph{polynomial calculus resolution} \cite{massimoPCR}.
	The main value of \autoref{thm:PC_extensionStronger} is that it demonstrates how finite-model-theoretic lower and upper bounds can directly lead to corresponding results in proof complexity. Other examples of finite-model-theoretic proofs for results in proof-complexity were given in \cite{ggpp}.

	\section{Preliminaries}
	All structures in this article are finite and relational. Formally, we assume that all relations are binary (whenever we need unary relations, we encode them as binary relations). We use the words ``graphs'' and ``binary structures'' interchangeably, so in particular, graphs can be vertex- or edge-coloured.  
	For a $\tau$-structure $A$ and relation symbol $R \in \tau$, $R(A)$ denotes the corresponding relation in the structure $A$. The universe of $A$ is denoted $V(A)$. We need the following concepts from finite model theory:\\
	\\
	\textbf{Weisfeiler Leman algorithm.} The $k$-dimensional Weisfeiler Leman algorithm ($k$-WL) is an incomplete graph isomorphism test that computes a canonical colouring of the $k$-tuples of vertices. Two graphs $G$ and $H$ are distinguished by $k$-WL if there is a colour class whose size is different in the colouring of $G$ and of $H$. For a precise definition and a survey, see e.g.\  \cite{sandraSiglog}.\\
	\\
	\textbf{$k$-variable counting logic.} We denote by $\Cc^k$ the $k$-variable fragment of first-order logic augmented with counting quantifiers $\exists^{\geq i}$, for every $i \in \bbN$. For two structures $G,H$ we write $G \equiv_{\Cc^k} H$ if $G$ and $H$ satisfy exactly the same $\Cc^k$-sentences. It is well-known (see Theorem 2.2 in \cite{sandraSiglog}) that $k$-WL distinguishes $G$ and $H$ if and only if $G \not\equiv_{\Cc^{k+1}} H$. In fact, the colour classes of the stable $k$-WL colouring correspond to the $\Cc^{k+1}$-types of the $k$-tuples. In this paper, we are mainly concerned with $2$-WL and $\Cc^3$-types of vertex-pairs in graphs. The $\Cc^3$-type of a pair $(v,w)$ in a structure $A$ is the collection of all $\Cc^3$-formulas $\phi(x,y)$ such that $A \models \phi(v,w)$. It contains a lot of (non-local) information, e.g.\ whether or not $v$ and $w$ are connected, the length of the shortest path between them, etc.\\
	\\
	\textbf{The bijective $k$-pebble game.} This game is played by two players, Spoiler and Duplicator, on two structures $G$ and $H$. A position of a play is a set of pebble-pairs $\pi \subseteq V(G) \times V(H)$ with $|\pi| \leq k$. In every round, Spoiler selects a subset $\pi’ \subseteq \pi$ with $|\pi'| < k$ of the current pebbles, which remain on the board. Duplicator then specifies a bijection $f : V(G) \lra V(H)$. Spoiler chooses a $v \in V(G)$, leading to the new position $\pi' \cup \{(v,f(v))\}$. Spoiler wins if the pebbles do not induce a local isomorphism between the pebbled substructures. Duplicator wins if she can play infinitely avoiding the pebbling of non-isomorphic substructures. Spoiler has a winning strategy for the bijective $k$-pebble game on $G$ and $H$ if and only if $G \not\equiv_{\Cc^k} H$ \cite{hella1996logical}.\\
	\\
	\textbf{Fixed-point logic with counting (FPC).} FPC is a standard logic of reference in algorithmic model theory. For the purposes of this paper, it suffices to know that for every sentence $\psi \in \FPC$, there is a $k$ such that whenever it holds $G \equiv_{\Cc^k} H$ for two structures, then $G \models \psi$ if and only if $H \models \psi$.
	See \cite{dawar2015nature} for a survey on this logic and its expressive power.

	\section{Choiceless Polynomial Time}
	\label{sec:CPT}
	By CPT we always mean Choiceless Polynomial Time \emph{with counting}. For details and various ways to define CPT formally, we refer to the literature: A concise survey can be found in \cite{gradel2015polynomial}. The work that originally introduced CPT as an abstract state machine model is \cite{blass1999}; later, more ``logic-like'' presentations of CPT were invented, such as Polynomial Interpretation Logic (see \cite{grapakschalkai15}, \cite{svenja}) and BGS-logic \cite{rossman2010choiceless}. In short, CPT is FPC plus a mechanism to construct isomorphism-invariant hereditarily finite sets of polynomial size. When a CPT-sentence $\Pi$ is evaluated in a finite structure $A$, then $\Pi$ may augment $A$ with hereditarily finite sets over its universe. The total number of distinct sets appearing in them (i.e.\ the sum over the sizes of the transitive closures of the h.f.\ sets) and the number of computation steps is bounded by $p(|A|)$, where $p(n)$ is a polynomial that is explicitly part of the sentence $\Pi$. We also write $\CPT(p(n))$ for the set of all CPT-sentences whose polynomial bound is at most $p(n)$. For the sake of illustration, we sketch the definition of BGS-logic:\\
	
	The sentences of BGS-logic are called \emph{programs}. A program is a tuple $\Pi = (\Pi_{\text{step}}(x),\\ \Pi_{\text{halt}}(x),\Pi_{\text{out}}(x),p(n))$. Here, $\Pi_{\text{step}}(x)$ is a BGS-term, $\Pi_{\text{halt}}$ and $\Pi_{\text{out}}$ are BGS-formulas, and $p(n)$ is a polynomial that bounds the time and space used by the program. BGS-terms take as input hereditarily finite sets and output a hereditarily finite set. Examples of such terms are $\Pair(x,y)$, which evaluates to $\{x,y\}$, or $\Union(x) = \bigcup_{y \in x} y$. Furthermore, if $s$ and $t$ are terms, $x$ is a variable, and $\phi$ a formula, then $\{ s(x) \ : \ x \in t \ : \ \phi(t) \}$ is a comprehension term. It applies the term $s$ to all elements of the set defined by $t$ that satisfy $\phi$, and outputs the set of the resulting objects $s(x)$. When a program is evaluated in a given finite structure $A$, then the term $\Pi_{\text{step}}(x)$ is iteratively applied to its own output, starting with $x_0 = \emptyset$. The iteration stops in step $i$ if the computed set $x_i = (\Pi_{\text{step}})^i(\emptyset)$ satisfies $A \models \Pi_{\text{halt}}(x_i)$. The formula $\Pi_{\text{out}}$ defines, in dependence of $x_i$, whether the run is accepting or rejecting, that is, whether $A \models \Pi$ or not. If the length of the run or the size of the transitive closure of $x_i$ exceeds $p(|A|)$ at some point, then the computation is aborted, and $A \not\models \Pi$.\\
	
	Recently, the computation model \emph{Deep Weisfeiler Leman} (DWL) has been introduced by Grohe, Schweitzer and Wiebking \cite{dwl}. DWL and CPT mutually simulate each other, and DWL is better suited to establish the connection to proof complexity.
	We present DWL in detail in Section \ref{sec:dwl}, since our proof of \autoref{thm:PC_mainInformal} actually goes via DWL.
	When we say that CPT \emph{distinguishes} certain graphs, we formally mean this:
	\begin{definition}[Distinguishing relational structures in CPT]
		\label{def:distinguishingInCPT}
		Let $\Kk$ be a class of $\tau$-structures. We say that $\CPT$ distinguishes all structures in $\Kk$ if there exists a polynomial $p(n)$ and a constant $k \in \bbN$ such that for any two structures $G, H \in \Kk$ which are non-isomorphic, there exists a sentence $\Pi \in \CPT(p(n))$ with $\leq k$ variables such that $G \models \Pi$ and $H \not\models \Pi$. 
	\end{definition}	
	This definition is perhaps non-standard because we allow the distinguishing sentence to be different for every pair of graphs in $\Kk$, whereas normally, one would expect a single sentence that distinguishes all graphs in the class. Our definition, however, matches the situation in proof complexity. As we explain in the next section, a proof system distinguishes all graphs in $\Kk$, if there exists an efficient proof for non-isomorphism of each pair of non-isomorphic graphs. This proof can of course be a different one for each pair of graphs, and the distinguishing CPT-sentences will play the role of the non-isomorphism proofs. The constant bound on the variable number is needed because with an unbounded number of variables, we could already find a first-order sentence for every graph that describes it up to isomorphism. Later on, in the DWL framework, this variable bound becomes irrelevant because DWL algorithms naturally correspond to bounded-variable CPT programs.\\
	
	The distinguishing power of CPT is strictly greater than that of FPC. This can for example be seen by considering CFI-graphs over linearly ordered \emph{base graphs} (every CFI-graph is obtained by applying the construction from \cite{caifurimm92} to a given connected base graph). We can summarise the situation like this:
	%	\newpage
	\begin{theorem}[\cite{dawar2008}]
		\label{thm:separationCPTandFPC}
		There is a  family of pairs of graphs $(G_n,H_n)_{n \in \bbN}$, that are equipped with a total preorder on the vertex set (encoded as a binary relation $\preceq$), such that:
		\begin{itemize}
			\item $G_n \not\cong H_n$ for all $n \in \bbN$. 
			\item For every $\FPC$-sentence $\psi$ and all large enough $n \in \bbN$: $\G_n \models \psi$ if and only if $H_n \models \psi$.
			\item There is a $\CPT$-sentence $\Pi$ such that for all $n \in \bbN$: $G_n \models \Pi$ and $H_n \not\models \Pi$.
		\end{itemize}	
	\end{theorem}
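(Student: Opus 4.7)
The plan is to instantiate the Cai--Fürer--Immerman (CFI) construction over a family of base graphs $(B_n)_{n \in \bbN}$ that are connected, of unbounded treewidth, and equipped with a linear order $<_n$ on their vertices. A canonical choice is to take $B_n$ to be a $3$-regular expander on $n$ vertices with an arbitrary linear order. Let $G_n := \mathrm{CFI}_0(B_n)$ and $H_n := \mathrm{CFI}_1(B_n)$ denote the even- and odd-twisted CFI graphs over $B_n$. The preorder $\preceq$ on the common vertex set $V(G_n) = V(H_n)$ is obtained by projecting each vertex to the base vertex of its gadget and inheriting the order from $<_n$: two vertices living in the same gadget are $\preceq$-equivalent, vertices of different gadgets are ordered according to the base vertices. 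Because $\preceq$ depends only on $B_n$ and not on the twist, the same relation is attached to $G_n$ and $H_n$.

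Non-isomorphism $G_n \not\cong H_n$ is the classical CFI lemma: the parity of the twists along a spanning subgraph of $B_n$ is an isomorphism invariant of the construction. For FPC-indistinguishability, I would invoke that every FPC-sentence $\psi$ is contained in $\Cc^k$ for some $k = k(\psi)$, and that the CFI theorem gives $G \equiv_{\Cc^k} H$ whenever $G$ and $H$ are the even/odd CFI graphs over a base graph of treewidth at least $k$. Since the expanders $B_n$ have treewidth tending to infinity with $n$, for every fixed $\psi$ we obtain $G_n \equiv_{\Cc^k} H_n$, and hence agreement on $\psi$, for all sufficiently large $n$. Enriching the signature with $\preceq$ does not help Spoiler in the bijective $k$-pebble game: any Duplicator strategy on the bare CFI graphs already preserves the base vertex underlying each pebble, and $\preceq$ is a function of exactly that information, so the strategy lifts without modification to the preordered structures.

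The hard direction, and the main obstacle, is producing a single CPT-sentence $\Pi$ that distinguishes $G_n$ from $H_n$ for every $n$ under a uniform polynomial resource bound; this is the content of \cite{dawar2008}. The plan is to exploit $\preceq$ to simulate choicelessly the ordered CFI decision algorithm: on an ordered base graph one picks a canonical spanning tree, propagates a fixed orientation of the CFI-gadgets along tree edges, and checks the $\bbZ_2$-sum of the twists on the non-tree edges. The spanning tree and the tree-orientation are definable from $\preceq$ by a standard $\ifp$-style iteration; the genuine difficulty is the intrinsic $\bbZ_2$-symmetry \emph{inside} each gadget, where two vertices sharing a $\preceq$-class are indistinguishable and no isomorphism-invariant choice between them exists. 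This is exactly where CPT's hereditarily-finite-set machinery is needed: rather than picking one of the two vertices, the sentence constructs the unordered pair as a set, and then aggregates these pairs along the canonical spanning tree into a single h.f.\ object whose shape (e.g.\ the cardinality of an associated coset) encodes the global twist parity and therefore differs between $G_n$ and $H_n$. Verifying isomorphism-invariance of all intermediate sets and a polynomial bound on their transitive-closure size is the bulk of the technical work; I would carry it out in the BGS-logic formalism of \cite{rossman2010choiceless} or the Polynomial Interpretation Logic presentation of \cite{grapakschalkai15}, and refer to \cite{dawar2008} for the full construction.
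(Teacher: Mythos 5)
The paper does not prove this theorem itself but cites \cite{dawar2008}, and your proposal correctly reconstructs that reference's approach: CFI graphs over (pre)ordered base graphs of unbounded treewidth, with the standard CFI/pebble-game argument for FPC-indistinguishability (noting that the preorder, which only orders gadgets and not the vertices within them, is respected by Duplicator's strategy) and the Dawar--Richerby--Rossman hereditarily-finite-set construction for the CPT upper bound. This matches exactly the framing the paper itself gives (CFI graphs over linearly ordered base graphs), so your sketch takes essentially the same route, with the genuinely hard part---the polynomial bound on the transitive closures of the constructed sets---appropriately deferred to \cite{dawar2008}.
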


	\section{The (extended) polynomial calculus}
	
	\label{sec:proofSystems}
	The \emph{polynomial calculus} ($\PC$) was introduced in \cite{groebner}. It is applicable to the following problem: Given a set $P$ of multivariate polynomials over a fixed field (in our case, $\bbQ$), decide if the polynomials in $P$ have a common zero with respect to $\{0,1\}$-assignments. The polynomial $1$ is derivable from $P$ if and only if the polynomials in $P$ have no common zero over $\{0,1\}$. A derivation of the $1$-polynomial is formally a sequence $p_1,p_2,...,p_n = 1$ of polynomials such that each $p_i$ is either in $P$ or an axiom of the polynomial calculus or is obtained from one or multiple $p_j$, for $j < i$, with the application of one of the derivation rules listed below. A derivation of the $1$-polynomial from $P$ is called a \emph{refutation} of $P$.\\
	A restricted variant of the polynomial calculus, the monomial calculus, has been introduced in \cite{berkholz}.
	The derivation rules of the polynomial/monomial calculus are the following:
	\begin{definition}[Inference rules of the (extended) polynomial calculus]
		Let $P$ be the set of input polynomials/axioms, $a,b \in \bbQ$, $X$ a variable, and $f,g$ polynomials with rational coefficients.
		\begin{align*}
			&\frac{p \in P}{p} \text{ (Axioms)}  \qquad &\frac{}{X^2 - X} &\text{ (Boolean axioms)} \\ &\frac{f}{Xf} 
			\text{ (Multiplication rule)} &\qquad
			\frac{g \ \ f}{ag+bf} &\text{ (Linear 
				combination rule)} 
		\end{align*}
		In the \emph{extended polynomial calculus} ($\EPC$), \emph{extension axioms} of the form $\frac{}{X_f - f}$ may be used whenever $X_f$ is a fresh variable not occurring in $f$. The Boolean axioms do not apply to these extension variables.
	\end{definition}
	The \emph{monomial calculus} ($\MC$) is a restriction of $\PC$ that permits the
	use of the multiplication rule only in the cases where $f$ is either a 
	monomial or the product of a monomial and an axiom.	For $\MC$, $\PC$, and $\EPC$, we also consider the degree-$k$ restrictions denoted $\MC_k$, $\PC_k$, and $\EPC_k$. Proofs in these degree-restricted calculi may only consist of polynomials of degree at most $k$. In general, these proof systems are not complete any more, but bounding the degree by a constant yields natural fragments that admit efficient proof search via Gröbner basis computation (at least for $\MC_k$ and $\PC_k$, this is the case; see \cite{groebner}).\\
	The \emph{size} of a refutation $p_1,p_2,...,p_n = 1$ is the total number of occurrences of monomials in all its polynomials. Its \emph{bit-complexity} is the maximum number of bits required to represent any of the occurring coefficients, where values in $\bbQ$ are stored as a fraction of two binary numbers.\\ 

	Intuitively, the effect of the extension axioms in this setting is that the degree-bound of three may be ``locally'' violated: Monomials like $A \cdot B \cdot C \cdot D$ can be written as $X_{AB} \cdot X_{CD}$, where $X_{AB}$ and $X_{CD}$ are fresh extension variables such that $X_{AB} = A \cdot B$, and $X_{CD} = C \cdot D$. Thus, with the help of extension axioms, we can implicitly use monomials of larger degree than allowed. If we restrict ourselves to refutations of polynomial size, then we can think of $\EPC_3$ as a version of degree-$3$ polynomial calculus where the degree bound can be violated a limited number of times.

	\section{Applying algebraic proof systems to the graph isomorphism problem}
	\label{sec:PC_groheBerkholz}
	Let $G$ and $H$ be fixed graphs, potentially with a colouring of the vertices or with multiple edge relations. We consider the following polynomial axiom system $\Piso(G,H)$ that expresses the existence of a (colour-preserving) isomorphism between $G$ and $H$. A refutation of $\Piso(G,H)$ in any variant of the polynomial calculus then witnesses that $G$ and $H$ are non-isomorphic.
	This definition of $\Piso(G,H)$ is almost the same as in \cite{berkholz}.
	\begin{definition}[$P_{iso}(G,H)$, \cite{berkholz}]
		\label{def:PC_isoSystem}
		Let $G$ and $H$ be two graphs (potentially vertex-coloured). Let $\sim \subseteq V(G) \times V(H)$ be the relation "vertex $v \in V(G)$ and $w \in V(H)$ have the same colour".\\
		The system $\Piso(G,H)$ consists of the following polynomials in the variables $\{X_{vw} \mid v \in V(G), w \in V(H), v \sim w\}$.
		\begin{align}
			\label{isoAxiom1}
			&\sum_{\stackrel{v \in V(G)}{v \sim w}} X_{vw} - 1 & &\text{ for all } w \in V(H)
		\end{align}
		\begin{align}
			\label{isoAxiom2}
			&\sum_{\stackrel{w \in V(H)}{v \sim w}} X_{vw} - 1 & &\text{ for all } v \in V(G)
		\end{align}
		\begin{align}
			\label{isoAxiom3}
			&X_{vw}X_{v'w'} & &\text{ for all } v,v' \in V(G), w,w' \in V(H)\\
			& & &\text{ with } v \sim w \text{ and } v' \sim w' \notag\\ 
			& & &\text{ such that } \{(v,w),(v',w')\} \text{ is not} \notag\\
			& & & \text{ a local isomorphism.} \notag
		\end{align}
	\end{definition}
	The intended meaning of the variable $X_{vw}$ being set to one is ``$v$ is mapped to $w$''. 
	When we say that a certain variant of the polynomial calculus \emph{distinguishes} two graphs $G,H$, we mean that the polynomial equation system $\Piso(G,H)$ has a refutation in that proof system.
	The main result from \cite{berkholz} links graph distinguishability in this sense to graph distinguishability by the $k$-dimensional Weisfeiler Leman algorithm. 
	\begin{theorem}[Theorem 4.4 in \cite{berkholz}]
		\label{thm:PC_mainGroheBerkholz}
		Let $k \in \bbN$ and let $G$ and $H$ be graphs.
		The axiom system $\Piso(G,H)$ has a refutation in the degree-$k$ monomial calculus iff the $(k-1)$-dimensional Weisfeiler Leman algorithm distinguishes $G$ and $H$.
	\end{theorem}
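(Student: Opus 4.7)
The plan is to prove both directions via the bijective $k$-pebble game, using the characterization that $(k-1)$-WL distinguishes $G$ and $H$ if and only if Spoiler wins the bijective $k$-pebble game on $G$ and $H$ (equivalently $G \not\equiv_{\Cc^k} H$).

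For the forward direction ($(k-1)$-WL distinguishes $\Rightarrow \MC_k$ refutes $\Piso(G,H)$), I would inductively convert Spoiler's winning strategy into an $\MC_k$-derivation. The invariant is: for every winning pebble position $\pi$ with $|\pi| \leq k$, the monomial $m_\pi := \prod_{(v,w) \in \pi} X_{vw}$ is derivable in degree $\leq k$. Base case: if $\pi$ contains pairs $(v,w),(v',w')$ that already violate a local isomorphism, then axiom (\ref{isoAxiom3}) gives $X_{vw}X_{v'w'}$, and iterated applications of the multiplication rule with the remaining variables produce $m_\pi$. Inductive step: if Spoiler's strategy shrinks $\pi$ to $\pi' \subsetneq \pi$ with $|\pi'| < k$, use axiom (\ref{isoAxiom1}) for some $w \in V(H)$, multiplied by $m_{\pi'}$, to obtain $\sum_v m_{\pi' \cup \{(v,w)\}} - m_{\pi'}$; combined with the inductively derived $m_{\pi' \cup \{(v,w)\}}$ one gets $m_{\pi'}$. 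Obtaining ``pointwise'' winning extensions from Spoiler's bijection-covering strategy requires a Hall/König-type combinatorial argument on the bipartite graph of winning extensions. Unfolding from $\pi = \emptyset$ yields the constant polynomial $1$.

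For the converse, I argue contrapositively: from Duplicator's winning strategy, build a \emph{pseudo-expectation} witnessing non-refutability. For each reachable position $\pi$ with $|\pi| < k$, the strategy induces a probability distribution $\mu_\pi$ on local isomorphisms extending $\pi$, coherent with Duplicator's bijective responses. Define a linear functional $E$ on polynomials of degree $\leq k$ in the $X_{vw}$ by $E[m_\pi] := \Pr_{\sigma \sim \mu_\emptyset}[\pi \subseteq \sigma]$ and extending linearly. Verify the following: (a)~$E[1] = 1$; (b)~$E$ annihilates $p \cdot q$ for every $p \in \Piso(G,H)$ and monomial $q$ with $\deg(pq) \leq k$, where axioms of type (\ref{isoAxiom3}) vanish because $\mu_\pi$ is supported on local isomorphisms, and axioms (\ref{isoAxiom1}), (\ref{isoAxiom2}) vanish by the bijectivity of the marginal distributions; (c)~the multiplication rule and linear combination rule of $\MC_k$ preserve membership in $\ker E$, provided the degree bound $k$ is respected. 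Since $E[1] \neq 0$, the constant $1$ is not derivable, so $\MC_k$ cannot refute $\Piso(G,H)$.

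The main obstacle is aligning the restricted inferences of the monomial calculus with the game semantics on both sides. In the forward direction, turning Spoiler's ``for every bijection, exists a vertex'' strategy into a ``for all vertices at a fixed coordinate'' summation required by axioms (\ref{isoAxiom1})/(\ref{isoAxiom2}) is delicate, since Spoiler's winning certificate is not directly pointwise; the monomial-calculus restriction (multiplication only by variables or axiom-times-monomial) prevents crude workarounds and forces a careful inductive schedule. In the converse direction, the coherent family $\{\mu_\pi\}$ must be defined so that single-variable multiplication preserves $\ker E$ despite $E$ being only a pseudo-expectation rather than a true joint distribution on $\{0,1\}$-assignments; it is precisely the bijectivity of Duplicator's strategy that rescues the argument and pins the equivalence to $\MC_k$ rather than a weaker or stronger proof system.
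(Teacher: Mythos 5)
This statement is imported verbatim from Berkholz--Grohe (Theorem 4.4 in \cite{berkholz}); the present paper gives no proof of it and only uses the Spoiler-to-derivation direction as a blackbox (\autoref{lem:PC_groheberkholzlemma}). Your two-directional sketch --- turning Spoiler's winning positions into $\MC_k$-derivable monomials, and building a pseudo-expectation from Duplicator's consistent family of local distributions for the converse --- is exactly the strategy of the cited proof, so the approach is the same; the one thing to fix is that the single-$w$ inductive step you display is not valid in general and must be replaced, via the K\H{o}nig cover of the bipartite graph of non-winning extensions that you already allude to, by a linear combination of axioms \eqref{isoAxiom1} and \eqref{isoAxiom2} taken over the cover.
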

	In \cite{berkholz}, this is not stated for vertex- or edge-coloured graphs, but it can be checked that the proof still goes through in these cases.
	For our result, we need some of the technical ingredients from the proof of \autoref{thm:PC_mainGroheBerkholz}: What is shown in \cite{berkholz} is that Spoiler's winning positions in the bijective $k$-pebble game on $G$ and $H$ are derivable in $\MC_k$ from $\Piso(G,H)$.\\
	A position in the game is a set of pebble pairs $\pi \subseteq V(G) \times V(H)$ of size $|\pi| \leq k$. The position $\pi$ corresponds to a monomial in the variables from $\Piso(G,H)$. We denote this monomial as $X_\pi := \prod_{(v,w) \in \pi} X_{vw}$ (so the $X_{vw}$ are the variables, whereas $X_\pi$ is shorthand for a \emph{product} of variables). We will use the following central technical result as a blackbox:
	\begin{lemma}[Lemma 4.2 in \cite{berkholz}]
		\label{lem:PC_groheberkholzlemma}
		Let $k \geq 2$ and $G,H$ be graphs (such that for every vertex-colour $Q$, there are exactly as many vertices of colour $Q$ in $G$ as in $H$). If Spoiler has a winning strategy for the bijective $k$-pebble game on $G,H$ with initial position $\pi$, then there is an $\MC_k$-derivation of the monomial $X_{\pi}$ from $\Piso(G,H)$.
	\end{lemma}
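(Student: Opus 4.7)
The plan is to argue by induction on the height $h$ of Spoiler's winning strategy tree at position $\pi$, showing that $X_\pi$ is $\MC_k$-derivable from $\Piso(G,H)$. Set $n := |V(G)| = |V(H)|$. For the base case ($h=0$), $\pi$ already fails to be a local isomorphism, so some pair $\{(v,w),(v',w')\} \subseteq \pi$ yields the axiom $X_{vw}X_{v'w'}$ via (\ref{isoAxiom3}); iterated application of the multiplication rule (valid in $\MC_k$ since we are multiplying a product of a monomial and an axiom by a single variable) appends the remaining variables of $\pi$ to obtain $X_\pi$, of degree $|\pi| \leq k$.

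For the inductive step, Spoiler's strategy removes some pebbles from $\pi$ to reach $\pi' \subseteq \pi$ with $|\pi'| < k$, and for every bijection $f : V(G) \to V(H)$ picks $v_f$ so that the position $\pi' \cup \{(v_f, f(v_f))\}$ is again Spoiler-winning with a strategy of height $< h$. By the inductive hypothesis, each monomial $X_{\pi' \cup \{(v_f, f(v_f))\}}$ is $\MC_k$-derivable. Since $X_\pi$ then follows from $X_{\pi'}$ by multiplying with the remaining variables in $\pi \setminus \pi'$, it suffices to derive $X_{\pi'}$. Define $W := \{(v,w) \in V(G) \times V(H) : X_{\pi' \cup \{(v,w)\}} \text{ is } \MC_k\text{-derivable}\}$. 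The strategy forces $W$ to meet every bijection, so $\bar W$ has no perfect matching; by K\"onig's theorem, $\bar W$ admits a vertex cover $A \cup B$ with $A \subseteq V(G)$, $B \subseteq V(H)$ and $|A| + |B| < n$, whence $A^* \times B^* \subseteq W$ for $A^* := V(G) \setminus A$ and $B^* := V(H) \setminus B$, and $|A^*| + |B^*| > n$.

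The sum axioms (\ref{isoAxiom1}) and (\ref{isoAxiom2}), multiplied iteratively by the variables of $X_{\pi'}$ (a permitted multiplication of an axiom by a monomial), produce the derivable relations $\sum_{w} X_{\pi' \cup \{(v,w)\}} - X_{\pi'}$ for each $v$ and $\sum_{v} X_{\pi' \cup \{(v,w)\}} - X_{\pi'}$ for each $w$, where the sums range over colour-compatible pairs. Summing the first family over $v \in A^*$ and the second family over $w \in B^*$, substituting the IH-derivations of $X_{\pi' \cup \{(v,w)\}}$ for $(v,w) \in A^* \times B^*$, and rearranging via the linear combination rule, one extracts a derivable polynomial whose principal term is $(|A^*| + |B^*| - n) \cdot X_{\pi'}$. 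Division by the positive integer $|A^*| + |B^*| - n$ then yields $X_{\pi'}$, from which $X_\pi$ follows by further multiplication.

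\emph{Main obstacle.} The delicate point is this last algebraic extraction: a naive combination of the sum axioms leaves a residual $\sum_{(v,w) \in A \times B} X_{\pi' \cup \{(v,w)\}}$ of monomials that are not individually known to be derivable. Cancelling this residue without ever exceeding the degree bound $|\pi'| + 1 \leq k$ is the combinatorial heart of the lemma and the main place where the hypothesis $k \geq 2$ and the matching of colour-class sizes between $G$ and $H$ are used; in \cite{berkholz} it is handled by a careful double-counting that uses both sum-axiom families in tandem and exploits the full rectangular structure of $A^* \times B^* \subseteq W$. Replicating this cancellation cleanly is where the real work of the proof lies.
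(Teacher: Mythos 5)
First, a point of reference: the paper does not prove this lemma at all --- it is imported verbatim as a black box from \cite{berkholz} --- so the comparison is necessarily with the proof in the cited source. Your skeleton (induction on the strategy, base case via Axiom \eqref{isoAxiom3} plus multiplying up, inductive step reducing to deriving $X_{\pi'}$ from the lifted sum axioms together with the derivable one-pebble extensions) is the right one, and your use of the multiplication rule stays within the monomial-calculus restrictions. But the gap you flag at the end is not a routine cancellation that ``careful double-counting over $A^*\times B^*$'' will close: the combinatorial data you extract from the game --- ``$W$ meets every bijection, hence $\overline{W}$ has a vertex cover of size $<n$'' --- is provably insufficient. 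Take $n=3$ and $W=\{2,3\}\times\{2,3\}$. This $W$ meets every permutation, $\overline{W}$ has the vertex cover $\{v_1\}\cup\{w_1\}$, and $A^*\times B^*=W$; yet any linear combination $\sum_v\lambda_v\bigl(\sum_w X_{\pi'}X_{vw}-X_{\pi'}\bigr)+\sum_w\mu_w\bigl(\sum_v X_{\pi'}X_{vw}-X_{\pi'}\bigr)+\sum_{(v,w)\in W}\nu_{vw}X_{\pi'}X_{vw}$ in which every monomial $X_{\pi'}X_{vw}$ with $(v,w)\notin W$ cancels must have $\lambda_v+\mu_w=0$ on the (connected) set $\overline{W}$, forcing all $\lambda_v=a$ and all $\mu_w=-a$, so the coefficient of $X_{\pi'}$ is $-(\sum\lambda+\sum\mu)=0$. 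When $|\pi'|=k-1$ you also cannot escape to degree $|\pi'|+2$ (e.g.\ via Boolean axioms), so your scheme cannot produce $X_{\pi'}$ from these hypotheses alone.

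The missing idea is that the set of Duplicator-winning extensions is not an arbitrary bipartite graph: Duplicator wins from $\pi'\cup\{(v,w)\}$ exactly when $(\pi'_G,v)$ and $(\pi'_H,w)$ have the same $\Cc^k$-type, so $\overline{W_0}$ (the non-winning extensions) is a \emph{disjoint union of complete bipartite blocks} $L_T\times R_T$, where $L_T\subseteq V(G)$ and $R_T\subseteq V(H)$ are the vertices realising type $T$ over $\pi'$. (My example above violates this: $(1,1),(1,2),(2,1)\in\overline{W}$ would force $(2,2)\in\overline{W}$.) A disjoint union of blocks has a perfect matching iff every block is balanced, so Spoiler winning from $\pi'$ yields a type $T$ with $|L_T|\neq|R_T|$; no K\"onig/vertex-cover step is needed. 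Now sum Axiom \eqref{isoAxiom2} lifted by $X_{\pi'}$ over $v\in L_T$ and Axiom \eqref{isoAxiom1} lifted by $X_{\pi'}$ over $w\in R_T$, cancelling the monomials $X_{\pi'}X_{vw}$ with $(v,w)$ outside the block (those are Spoiler-winning, hence derivable): both computations reduce to the \emph{same} cross sum $\sum_{v\in L_T,\,w\in R_T}X_{\pi'}X_{vw}$, once equal to $|L_T|X_{\pi'}$ and once to $|R_T|X_{\pi'}$, whence $(|L_T|-|R_T|)X_{\pi'}$ is derivable and $X_{\pi'}$ follows. This exact cancellation over a single unbalanced type-class is the content of the step you left open; it also shows that the induction is best organised along the Weisfeiler--Leman refinement rounds (as in the appendix proof of Lemma~\ref{lem:PC_equivWLPebble}) rather than along one strategy tree, so that derivability is available for \emph{all} Spoiler-winning extensions of $\pi'$, not only those reached against particular bijections.
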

	This lemma accounts for one direction of \autoref{thm:PC_mainGroheBerkholz} because if $G \not\equiv_{\Cc^{k}} H$, then Spoiler wins the bijective $k$-pebble game from the initial position $\emptyset$, and we have $X_\emptyset = 1$.

	\section{Separating the bounded-degree extended polynomial calculus from its non-extended version}
	\label{sec:PC_separation}
	Before we come to the more technical part, we show how \autoref{thm:PC_extensionStronger} follows from \autoref{thm:PC_mainInformal}. According to Theorem 6.2 in \cite{berkholz}, for every $n \in \bbN$, there exist pairs $G_n, \widetilde{G}_n$ of non-isomorphic CFI-graphs of size $\Oo(n)$ such that $\Piso(G_n, \widetilde{G}_n)$ has no degree-$n$ polynomial calculus refutation (over $\bbQ$). A closer examination of the construction in \cite{berkholz} reveals that the axioms \eqref{isoAxiom1} and \eqref{isoAxiom2} in $\Piso(G_n, \widetilde{G}_n)$ are for coloured versions of the respective CFI-graphs: Each vertex-gadget and each edge-gadget of $G_n$ and $\widetilde{G}_n$, respectively, forms a distinct colour class and the axioms restrict possible isomorphisms to colour-preserving ones (the precise definition of the said vertex- and edge-gadgets is not essential here, so we refer to \cite{dawar2008} for the presentation of the CFI construction). 
	Therefore, we have $\Piso(G_n, \widetilde{G}_n) = \Piso((G_n, \preceq), (\widetilde{G}_n, \preceq))$ for any preorder $\preceq$ on $V(G_n)$ that is obtained from a linear order on the respective base graph; in other words, a preorder that linearly orders the CFI-gadgets without ordering the vertices inside each gadget (for details, see \cite{caifurimm92} or \cite{dawar2008}). Note that our definition of $\Piso$ also works for graphs with multiple edge relations, and we can simply view the binary relation $\preceq$ as another type of edge relation.
	Now the system $\Piso((G_n, \preceq), (\widetilde{G}_n, \preceq))$ does have a polynomial-size refutation in $\EPC_3$, for any choice of the ordering, because CFI-graphs over linearly ordered base graphs can be distinguished in CPT \cite{dawar2008} and thus, a refutation exists by our \autoref{thm:PC_mainInformal}.
	
	\section{Deep Weisfeiler Leman}
	\label{sec:dwl}
	Before we are ready to prove \autoref{thm:PC_mainInformal}, we have to introduce the technical details of \emph{Deep Weisfeiler Leman}, a computation model equivalent to CPT that we will simulate in $\EPC_3$.\\
	A DWL-algorithm is a deterministic Turing machine that gets as input a finite structure with binary relations. All DWL-computations are isomorphism-invariant: The machine does not have access to the input structure directly, but only to its so-called \emph{algebraic sketch}. This is a certain invariant of the structure, similar to its $2$-dimensional Weisfeiler Leman colouring. The machine is not only able to read information about its input structure but it can also modify the structure in an isomorphism-invariant way. These modifications correspond to the creation of higher-order objects in Choiceless Polynomial Time.\\
	Before we can introduce the Deep Weisfeiler Leman framework in detail, we have to say precisely what the algebraic sketch of a structure is. It is a representation of its \emph{coarsest coherent configuration} (or ``coherent colouring''), that is defined below. The coarsest coherent configuration of a structure is also known as its stable $2$-dimensional Weisfeiler Leman colouring (equivalent to the partition of all pairs into their $\Cc^3$-types). 
	The following presentation closely follows the one in \cite{dwl}. 
	
	\subsection{Coherent configurations of binary structures}
	\label{sec:preliminariesDWL}
	
	\begin{definition}[Notions concerning binary relations, \cite{dwl}]
		~\\
		\vspace{-\baselineskip}
		\label{def:PC_binaryRelations}	
		\begin{itemize}
			\item The \emph{converse} of a relation $R$ is the relation $R^{-1}:=\{(v,u)\mid
			(u,v)\in R\}$. 
			\item For a set $V$, the \emph{diagonal} of $V$ is the
			relation $\diag(V):=\{(v,v)\mid v\in V\}$. For a relation $R \subseteq V^2$ we let
			$R^\diag:=R\cap\diag(V)$ be the diagonal elements in~$R$. We call $R$ a \emph{diagonal relation} if $R=R^\diag$.
			\item The \emph{strongly connected components} (\emph{SCC}s) of a relation
			$R$ are defined in the usual way as inclusionwise
			maximal sets $S$ such that for all $u,v\in S$ there is an
			$R$-path of length at least $1$ from $u$ to $v$.
			(A singleton
			set $\{u\}$ can be a strongly connected component only if $(u,u)\in R$.)
			We write $\scc(R)$ to denote the set of strongly connected
			components of $R$. Moreover, we let $R^\scc:=\bigcup_{S\in\scc(R)}S^2$ be the relation describing whether two elements are in the same strongly connected component.
		\end{itemize}
	\end{definition}
	
	\begin{definition}[Coherent configurations, \cite{dwl}]
		\label{def:PC_coherentConfig}
		Let $\sigma$ be a vocabulary.
		A \emph{coherent $\sigma$-configuration} $C$ is a $\sigma$-structure $C$
		with the following properties.
		\begin{itemize}
			\item $\{R(C) \mid R\in\sigma \}$ is a partition of $V(C)^2$.
			\item For each $R \in\sigma$ the relation $R(C)$ is either a
			subset of or disjoint from the diagonal $\diag(V(C))$.
			\item For each $R\in\sigma$ there is an
			$R^{-1}\in\sigma$ such that ${R^{-1}}(C)= (R(C))^{-1}$.
			\item For all $R_1,R_2,R_3\in\sigma$ there is a number
			$q = q(R_1, R_2 ,R_3)\in\bbN$ such that for all $(u,v)\in R_1(C)$ there are
			exactly $q$ elements $w \in V(C)$ such that $(u,w)\in R_2(C)$ and $(w,v)\in
			R_3(C)$.
		\end{itemize}
		The numbers $q(R_1,R_2,R_3)$ are called the \emph{intersection numbers} of
		$C$ and the function $q:\sigma^3\to \bbN$ is called the
		\emph{intersection function}.\\
	\end{definition}
	\vspace{-0.2cm}
	A coherent $\sigma$-configuration $C$ is at least as
	\emph{fine} as, or \emph{refines}, a
	$\tau$-structure $A$ (we write $C \sqsubseteq A$ ) if $V(A) = V(C)$, and for each $R\in\sigma$
	and each $E\in\tau$
	it holds that $R(C) \subseteq E(A)$ or that
	$R(C)\subseteq A^2\setminus E(A)$.
	% there is a set $\pi\subseteq\sigma$ such that
	% $E(A)=\bigcup_{R\in\pi}R(C)$.
	We say that a coherent configuration $C$ is a \emph{coarsest coherent configuration refining} a
	structure $A$ if $C \sqsubseteq A$ and $C'\sqsubseteq C$ for every coherent
	configuration $C'$ satisfying~$C'\sqsubseteq A$. In the following, we will usually write $\tau$ for the vocabulary of a given structure and $\sigma$ for the vocabulary of the corresponding coherent configuration, without further specifying $\sigma$.

	Every binary structure $A$ has a coarsest coherent configuration refining it, which can be computed efficiently with the $2$-dimensional Weisfeiler Leman algorithm (Theorem 2.1 in \cite{dwl}). This configuration is unique up to the renaming of relation symbols. We write $C(A)$ for the coarsest coherent configuration of $A$ with canonical names of the relation symbols, as for example produced by a fixed implementation of $2$-WL. 
	We call the relation symbols in $\sigma$ \emph{colours} to distinguish them from the relation symbols in $\tau$. In the following, we often identify the symbols in $\tau$ and $\sigma$ with binary strings, because this is how they are represented in a Turing machine.

	The \emph{algebraic sketch} of a structure contains information about the colours appearing in its coarsest coherent configuration, which relations of the structure they refine, and the intersection function $q$. Formally, the algebraic sketch of a structure $A$
	is the tuple $D(A) = (\tau,\sigma, \subseteq_{\sigma,\tau},q)$. The relation $\subseteq_{\sigma,\tau}$ relates the colours in $\sigma$ with the relations in $\tau$ they refine: $\subseteq_{\sigma,\tau} := \{ (R,E) \in \sigma \times \tau \mid R(C(A)) \subseteq E(A)  \}$. 
	In order to feed $D(A)$ to a Turing machine, we have to agree on some encoding in binary. If the binary string encodings of the relation symbols are fixed, then there is a canonical encoding of $D(A)$, based on the lexicographic ordering of the relation names and ordering of the intersection numbers  $q(R_1,R_2,R_3)$. The string that encodes $D(A)$ is the initial tape content in a DWL-computation on the structure $A$.
	
	\subsection{The Deep Weisfeiler Leman computation model} 
	\label{sec:PC_DWLdefinition}
	A DWL-algorithm is a two-tape Turing machine $M$ with an additional storage device that the authors of \cite{dwl} have named ``the \emph{cloud}''. It contains a structure $A$ together with its coarsest coherent configuration $C(A)$.
	The storage that the machine itself can use is a \emph{work tape} and an \emph{interaction tape}, which allows for interaction with the cloud. The input of a DWL-program $M$ is a binary $\tau$-structure $A$. Initially, the cloud contains the coherently coloured structure $(A, C(A))$, and on the interaction tape, the algebraic sketch $D(A)$ is written. The work tape is empty.\\
	The Turing machine works as a standard Turing machine with two special transitions that can modify the structure in the cloud. To execute these, the machine writes a binary string $s \in \{0,1\}^*$ on the interaction tape and enters one of the two distinguished states $q_{\pairOp}, q_{\sccOp}$. If $s$ is a colour $R \in \sigma$, then we say that $M$ executes $\pairOp(R)$ or $\sccOp(R)$. Executing $\pairOp(R)$ creates a new vertex for each vertex-pair whose colour in $C(A)$ is $R$. The operation $\sccOp(R)$ contracts every SCC formed by pairs of colour $R$ into a single vertex.
	\begin{itemize}
		\item $\pairOp(R)$: The machine adds a fresh vertex for each pair in $R(C(A))$ to $A$, i.e.\ it updates $V(A)$ to $V(A) \uplus R(C(A))$. These pairs are then connected with their elements in $A$. Therefore, $\tau$ is updated to $\tau \cup \{E_{\text{left}},E_{\text{right}}\} \uplus \{D_R\}$. The new relation $D_R$ identifies the newly added vertices, i.e.\ $D_R(A) := \diag(R(C(A)))$. The symbol $D_R$ is chosen as the lexicographically smallest binary string that is not yet used as a relation symbol.\\
		Furthermore, $E_{\text{left}}(A)$ is updated to $E_{\text{left}}(A) \cup \{(u,(u,v)) \in V(A)^2 \mid (u,v) \in R(C(A))  \}$, and $E_{\text{right}}(A)$ is set to $E_{\text{right}}(A) \cup \{(v,(u,v)) \in V(A)^2 \mid (u,v) \in R(C(A))  \}$ (where initially, $E_{\text{left}}(A) = E_{\text{right}}(A) = \emptyset$).
		\item $\sccOp(R)$: Let $\Ss := \scc(R)$ be the set of strongly connected components of the relation $R$. Let $U := V(A) \setminus \bigcup \Ss$ be the set of vertices that are not in one of the strongly connected components. The components in $\Ss$ are contracted. That means we update $V(A)$ to $U \uplus \Ss$, and $\tau$ to $\tau \uplus \{D_R\}$. Again, $D_R(A) := \diag(\Ss)$. For each relation $E \in \tau$, we update $E(A)$ to $(E(A) \cap U^2) \cup \{(u,S) \mid \exists v \in S \in \Ss: (u,v) \in E(A)  \} \cup \{ (S,v) \mid  \exists u \in S \in \Ss: (u,v) \in E(A) \} \cup \{ (S_1,S_2) \mid \exists u \in S_1 \in \Ss, \exists v \in S_2 \in \Ss: (u,v) \in E(A)  \}$. 
	\end{itemize}
	Each of these special transitions modifies the structure $A$ in the cloud in an isomorphism-invariant way. After that, the cloud storage device computes $C(A)$ (for example, with the $2$-WL algorithm) and stores the coherently coloured structure $(A,C(A))$. The algebraic sketch $D(A)$ of the new structure is written on the interaction tape.\\
	A DWL-algorithm decides a class $\Kk$ of $\tau$-structures in the usual sense, i.e.\ the algorithm halts with output $1$ on input $A$ if $A \in \Kk$, and else, it halts with output $0$. We say that a DWL-algorithm \emph{runs in polynomial time} if the number of computation steps of the Turing machine and the size of the structure in the cloud is bounded by a polynomial in the size of the input structure.
	The original definition of DWL in \cite{dwl} also has two more operations, $\createOp$ and $\forgetOp$, but it can be shown that these do not increase the expressive power; the proof is similar to the proof in \cite{dwl} showing that ``pure DWL'' simulates DWL, so we omit it.

	\subsection{Distinguishing graphs in Deep Weisfeiler Leman}
	\label{sec:PC_distinguishingGraphsDWL}
	
	In \cite{dwl}, the authors say that a DWL-algorithm \emph{decides isomorphism} on a structure class $\Kk$ if it gets as input the disjoint union $A:= G \uplus H$ of two connected binary structures and correctly decides whether $G \cong H$. A crucial technical result in \cite{dwl} shows that one can always assume that at any stage of the computation, the structure in the cloud is the disjoint union of two connected structures: It is never necessary for the algorithm to produce connections between the two components, i.e.\ $\pairOp$ and $\sccOp$ are only executed for colours $R$ with $R(C(A)) \subseteq V(G)^2 \cup V(H)^2$. A DWL-algorithm that maintains this invariant is called \emph{normalised} in \cite{dwl}. 
	\begin{definition}[Distinguishing structures in DWL]
		\label{def:PC_distinguishInDWL}
		The computation model $\DWL$ \emph{distinguishes} all structures in a class $\Kk$ (of connected $\tau$-structures) in \emph{polynomial time} if: There is a polynomial $p(n)$ such that for any two non-isomorphic structures $G, H \in \Kk$, there exists a \emph{normalised} DWL-algorithm $M$ which, given $G\uplus H$ as input, terminates with a structure $G' \uplus H'$ in the cloud such that $D(G') \neq D(H')$, and takes time and space at most $p(|G|+|H|)$.	
	\end{definition}	
	This is simply the DWL-version of \autoref{def:distinguishingInCPT} for distinguishing structures in CPT. The main difference to the CPT-setting is that here, the constant bound on the number of variables is already implicit in the definition of DWL (because DWL only accesses a structure via its coherent configuration, and two structures with the same configuration are $\Cc^3$-equivalent).\\
	Let us elaborate on what is meant precisely by $D(G') \neq D(H')$.
	Whenever $A = A_1 \uplus A_2$ is a $\tau$-structure consisting of two separate connected components, then we write $D(A)[A_1]$ and $D(A)[A_2]$ for the restrictions of the algebraic sketch $D(A)$ to the $\sigma$-colours that occur as colours of pairs in $V(A_1)^2$ or $V(A_2)^2$, respectively. 
	It follows from the properties of normalised DWL-computations (Lemma 8 in \cite{dwl}) that $D(A)[A_i]$ is in fact the algebraic sketch of $A_i$, so the sketch of $A_1 \uplus A_2$ is composed of the sketches of the two structures:
	\begin{lemma}
		\label{lem:restrictionOfSketch}
		Let $A = A_1 \uplus A_2$ be the disjoint union of two connected $\tau$-structures. For $i \in \{1,2\}$, $D(A)[A_i]$ is an algebraic sketch and equal to $D(A_i)$, up to a renaming of the colours in $\sigma$.
	\end{lemma}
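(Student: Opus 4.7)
The plan is to invoke the block-structure property of the coarsest coherent configuration on a disjoint union, then verify that the restriction is a coherent configuration of each component and that it is in fact the coarsest such.

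First, I will establish (or invoke from Lemma~8 of~\cite{dwl}) the \emph{block-structure property}: every colour $R \in \sigma$ of $C(A)$ is contained either in the within-component part $V(A_1)^2 \cup V(A_2)^2$ or in the cross-component part $V(A_1) \times V(A_2) \cup V(A_2) \times V(A_1)$. This rests on the standard fact that $2$-WL on the disjoint union of two connected structures distinguishes same-component pairs from cross-component pairs, and is built into the invariant maintained by normalised DWL.

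Next, I will check that $C(A)|_{V(A_i)^2}$, obtained by restricting each within-component colour $R$ with $R \cap V(A_i)^2 \neq \emptyset$ to $V(A_i)^2$, is itself a coherent configuration refining $A_i$. The partition, diagonal, and converse closure conditions are inherited from $C(A)$; the crucial point for the intersection numbers is that, by the block-structure property, for a within-component colour $R_2$ and $u \in V(A_i)$, any $w$ with $(u,w) \in R_2$ lies in $V(A_i)$, so the count of such $w$ in $C(A)$ coincides with the count inside $A_i$. Hence the intersection function of $C(A)$ restricts correctly to an intersection function on $V(A_i)^2$, and $C(A)|_{V(A_i)^2} \sqsubseteq A_i$.

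Finally, I will show that this restriction equals $C(A_i)$, up to a renaming of colour symbols. One direction, $C(A)|_{V(A_i)^2} \sqsubseteq C(A_i)$, is immediate from $C(A_i)$ being the coarsest coherent refinement of $A_i$. For the other direction I construct $\hat{C} := C(A_1) \uplus C(A_2) \uplus X$ on $V(A)^2$, where $X$ colours each cross-component pair $(u,v)$ by the pair consisting of the diagonal ($C(A_1)$- or $C(A_2)$-)type of $u$ and of $v$. Since there are no edges of $A$ between the two components, $\hat{C} \sqsubseteq A$, and by minimality of $C(A)$ we obtain $\hat{C} \sqsubseteq C(A)$; restricting to $V(A_i)^2$ yields $C(A_i) \sqsubseteq C(A)|_{V(A_i)^2}$, and combining both directions gives equality of partitions, i.e.\ $D(A)[A_i] = D(A_i)$ up to renaming. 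I expect the main technical obstacle to be verifying coherence of $\hat{C}$: one must case split on whether the intermediate vertex $w$ lies in $V(A_1)$ or in $V(A_2)$ and express the intersection numbers involving cross-component colours in terms of intersection numbers and constant valencies of $C(A_1)$ and $C(A_2)$, which is where the specific choice of $X$ (partitioning by diagonal type-pairs) is essential.
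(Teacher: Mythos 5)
Your proposal is correct, and it supplies a proof that the paper itself does not spell out: the paper simply asserts the lemma as a consequence of ``the properties of normalised DWL-computations (Lemma~8 in \cite{dwl})'' and gives no argument. Your route is the natural self-contained one, and all three steps hold up. The block-structure property is exactly the content of the cited Lemma~8 and ultimately rests on connectivity (reachability between two vertices is $\Cc^3$-expressible, so within-component and cross-component pairs necessarily get distinct colours in the coarsest coherent configuration); note that this step is genuinely needed only for your second stage, since without it a colour $R_2$ could contain cross pairs and the count of intermediate vertices $w$ over $V(A)$ would not collapse to the count over $V(A_i)$. Your sandwich argument is also sound: the restriction being a coherent refinement of $A_i$ gives $C(A)|_{V(A_i)^2} \sqsubseteq C(A_i)$ by coarsestness, and the explicit configuration $\hat{C}$ (with cross pairs coloured by the pair of endpoint fibers) is indeed coherent --- the case split on the location of $w$ reduces every intersection number either to one of $C(A_1)$ or $C(A_2)$, or to a fiber size, or to an out-valency $q(D, R, R^{-1})$, all of which are constant on colour classes by \autoref{lem:PC_endpointsColour}-type facts --- so $\hat{C} \sqsubseteq C(A)$ and the reverse refinement follows. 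Two mutually refining partitions coincide, and since the intersection function and the relation $\subseteq_{\sigma,\tau}$ are determined by the partition together with $E(A) \cap V(A_i)^2 = E(A_i)$, the sketches agree up to renaming. The only part you leave as a promissory note, the coherence verification for $\hat{C}$, does go through as you describe, so there is no gap; writing out that case analysis would complete the proof.
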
  
	Thus, when we write $D(G') \neq D(H')$, we are formally referring to the respective restrictions of $D(G' \uplus H')$, but these are equivalent to $D(G')$ and $D(H')$, respectively.
	The algebraic sketches being distinct means that $2$-WL distinguishes the structures. By standard results (see e.g.\ \cite{sandraSiglog}), one can infer:
	\begin{lemma}
		\label{lem:sketchesDistinctSpoilerWins}
		Let $G, H$ be connected $\tau$-structures. It holds $D(G) \neq D(H)$ if and only if Spoiler has a winning strategy for the bijective $3$-pebble game on $G$ and $H$. 
	\end{lemma}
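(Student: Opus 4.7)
The plan is to reduce the statement to the well-known equivalence between $2$-dimensional Weisfeiler Leman, $\mathcal{C}^3$-equivalence, and the bijective $3$-pebble game, which is noted in the Preliminaries.

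First, I would argue that $D(G) \neq D(H)$ is equivalent to $2$-WL distinguishing $G$ and $H$. Recall that $D(A) = (\tau,\sigma,\subseteq_{\sigma,\tau},q)$ is, by construction, precisely the data produced by the $2$-WL algorithm on $A$ (the partition of $V(A)^2$ into colour classes $R(C(A))$ together with the refinement relation to $\tau$ and the intersection numbers). Since the canonical name $C(A)$ is produced by a fixed implementation of $2$-WL, the canonical encoding of $D(G)$ and of $D(H)$ agree as strings iff the stable $2$-WL colourings of $G$ and $H$ agree up to isomorphism of coherent configurations, i.e.\ iff there is a colour-preserving bijection between $V(G)^2$ and $V(H)^2$ respecting the induced partitions. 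In particular, $D(G) \neq D(H)$ iff $2$-WL distinguishes $G$ and $H$ in the sense made precise in the Preliminaries (a colour class having different sizes in $G$ and $H$).

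Second, I would invoke the standard characterisation quoted in the Preliminaries (Theorem~2.2 in \cite{sandraSiglog}): $2$-WL distinguishes $G$ and $H$ iff $G \not\equiv_{\mathcal{C}^3} H$. Finally, the correspondence between $\mathcal{C}^k$-equivalence and the bijective $k$-pebble game from \cite{hella1996logical}, also cited in the Preliminaries, yields $G \not\equiv_{\mathcal{C}^3} H$ iff Spoiler has a winning strategy for the bijective $3$-pebble game on $G$ and $H$ from the empty position. Chaining these three equivalences gives the lemma.

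The main obstacle is the first step, namely making precise that $D(G) \neq D(H)$ as encoded strings really coincides with ``$2$-WL distinguishes $G$ and $H$''. One direction is immediate: if the colour classes of $C(G)$ and $C(H)$ have different sizes (or different refinement pattern to $\tau$, or different intersection numbers), then any canonical encoding of $D(G)$ and $D(H)$ must differ. For the converse, one needs the fact that the canonical naming scheme used by the fixed $2$-WL implementation assigns the same colour name to two $\sigma$-relations iff they are indistinguishable by the $2$-WL invariants, so coinciding $2$-WL outputs yield identical canonical sketches. This is essentially bookkeeping about the canonical encoding and the uniqueness (up to renaming) of the coarsest coherent configuration noted below \autoref{def:PC_coherentConfig}; it does not require a new idea, but it is the only part where one has to be slightly careful.
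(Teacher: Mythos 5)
Your proposal is correct and matches the paper's treatment: the paper likewise derives this lemma by observing that distinct algebraic sketches are equivalent to $2$-WL distinguishing $G$ and $H$, and then invoking the standard equivalences with $\Cc^3$-equivalence (Theorem~2.2 in \cite{sandraSiglog}) and the bijective $3$-pebble game \cite{hella1996logical}. Your extra care about the canonical encoding of $D(\cdot)$ only elaborates a step the paper states without further argument.
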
	
	
	We can say even more, namely that Spoiler can distinguish pairs of different colours. The following is a variation of a standard result (Theorem 2.2 in \cite{sandraSiglog}). The standard result concerns the setting where the graphs $G$ and $H$ are considered separately with their respective coarsest coherent configurations. Here, we have to work with their disjoint union. It is perhaps not surprising that in this setting, the correspondence between Weisfeiler-Leman colourings and pebble games also exists, but we are not aware of a formal proof for this statement for $G \uplus H$ in the literature. Thus, for completeness, we provide one in the appendix.
	\begin{restatable}{lemma}{restatePebbleLemma}
		\label{lem:PC_equivWLPebble}
		Let $G,H$ be two connected $\tau$-structures and $A := G \uplus H$ with its coarsest coherent $\sigma$-configuration $C(A)$. Let $(v,v') \in V(G)^2, (w,w') \in V(H)^2$ such that there is no $R \in \sigma$ with $(v,v') \in R(C(A))$ and $(w,w') \in R(C(A))$.\\
		Then Spoiler has a winning strategy for the bijective 3-pebble game on $G$ and $H$ with initial position $\{(v,w),(v',w')\}$.
	\end{restatable}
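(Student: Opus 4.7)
The plan is to lift the statement to the disjoint union $A = G \uplus H$, apply the standard equivalence between the coarsest coherent configuration, $\Cc^3$-types, and the bijective pebble game on $A$ against $A$, and then transfer the resulting Spoiler strategy back to the game on $G$ against $H$.

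First, by the classical equivalence between stable $2$-WL colours and $\Cc^3$-types on pairs (Theorem 2.2 of \cite{sandraSiglog}) combined with Hella's game characterisation \cite{hella1996logical}, the hypothesis that $(v, v')$ and $(w, w')$ fall into different colour classes of $C(A)$ yields a winning strategy $\sigma_A$ for Spoiler in the bijective $3$-pebble game played on $A$ against $A$ from the initial position $\pi_0 := \{(v, w), (v', w')\}$. Note that $\pi_0$ has only two pebble pairs, so it fits into the $3$-pebble budget.

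Second, I convert $\sigma_A$ into a winning strategy $\sigma_{G, H}$ for the game played on $G$ against $H$ from $\pi_0$. When Duplicator in the $G$-vs-$H$ game proposes a bijection $f : V(G) \to V(H)$, extend it to a bijection $\hat f : V(A) \to V(A)$ which swaps the two components by setting $\hat f|_{V(G)} := f$ and $\hat f|_{V(H)} := f^{-1}$. Feed $\hat f$ into $\sigma_A$ and read off a vertex $u \in V(A)$: if $u \in V(G)$, then Spoiler plays $u$ on the $G$-side of the $G$-vs-$H$ board, producing the pebble $(u, f(u))$; if $u \in V(H)$, then Spoiler plays $u$ on the $H$-side, producing the pebble $(f^{-1}(u), u)$. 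Because $\hat f$ always swaps components, every pebble in the simulated $A$-vs-$A$ play has one coordinate in $V(G)$ and one in $V(H)$, and the corresponding $G$-vs-$H$-pebble is obtained by placing the $V(G)$-coordinate first.

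The main obstacle is verifying that local-isomorphism failures produced by $\sigma_A$ in the $A$-vs-$A$ game transfer to local-isomorphism failures in the $G$-vs-$H$ game, despite the fact that $\sigma_A$ may produce pebbles of two different orientations. A short case analysis on two pebbles $P_1, P_2$ does the job. If both pebbles have the same orientation, then the relevant pairs of first and of second coordinates lie entirely in $V(G)^2$ or in $V(H)^2$, so the $A$-check ``$(a_1, a_2) \in R(A) \Leftrightarrow (b_1, b_2) \in R(A)$'' coincides with the required $G$-vs-$H$-check on the reoriented pebbles. If the two orientations differ, then both sides of the $A$-check are vacuously false: the disjoint union $A$ has no edges between $V(G)$ and $V(H)$, and the two components share no diagonal element. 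Hence any violation witnessed by $\sigma_A$ must arise between two pebbles of matching orientation, and it transfers verbatim into a violation of local isomorphism in the $G$-vs-$H$ game; therefore $\sigma_{G, H}$ is indeed winning for Spoiler.
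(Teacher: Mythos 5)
Your proof is correct, but it takes a genuinely different route from the paper's. The paper proves the lemma from scratch by induction on the rounds of the $2$-WL refinement applied to $A = G \uplus H$: it fixes an initial colouring that separates crossing from non-crossing pairs, establishes invariants about how diagonal and crossing colours refine in lockstep, and reads off Spoiler's moves in the $G$-versus-$H$ game directly from the witnessing intersection numbers. You instead black-box the single-structure correspondence (colours of $C(A)$ $=$ $\Cc^3$-types of pairs in $A$ $=$ winning positions in the bijective $3$-pebble game on $A$ against $A$ with the given parameters) and contribute only the transfer step, via the component-swapping extension $\hat f$ of Duplicator's bijection. The one delicate point in your reduction is exactly the one you identify: pebble pairs of mixed orientation impose no constraint in the simulated $A$-versus-$A$ position (no relation of the disjoint union crosses components and the components share no elements), so any local-isomorphism failure found by $\sigma_A$ comes from a same-orientation pair and transfers to the $G$-versus-$H$ board; the converse failure direction is harmless since it only helps the real Spoiler. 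Your approach buys brevity and modularity, and it shows that the $G \uplus H$ version the paper says it could not locate in the literature does follow formally from the standard single-structure result; the cost is that you must invoke that result for the disconnected structure $A$ and in its parameterized form (Hella's theorem with initial positions, and the type/colour correspondence for $A$ itself), which is standard but slightly stronger than the parameter-free statements quoted in the paper's preliminaries, whereas the paper's induction is self-contained and additionally yields the structural facts about crossing colours that it reuses elsewhere. Two trivial loose ends you should state: if $|V(G)| \neq |V(H)|$ Duplicator has no bijection to propose and Spoiler wins outright, and the single-pebble (diagonal relation) check is the degenerate same-orientation case of your analysis.
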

	
	As shown below, if CPT distinguishes all structures in a class $\Kk$, then also DWL distinguishes all structures in $\Kk$ in polynomial time. Hence, in our proof of \autoref{thm:PC_mainInformal}, we can indeed start with the assumption that DWL polynomially distinguishes all graphs in $\Kk$.
	\begin{lemma}
		\label{lem:connectionDWLdistinguishabilityCPT}
		Let $\Kk$ be a class of connected structures that are distinguished by $\CPT$ in the sense of \autoref{def:distinguishingInCPT}.
		Then DWL distinguishes all structures in $\Kk$ in polynomial time in the sense of \autoref{def:PC_distinguishInDWL}.
	\end{lemma}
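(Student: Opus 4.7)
The plan is to apply the CPT-to-DWL translation of Grohe, Schweitzer and Wiebking \cite{dwl} separately for each pair of non-isomorphic structures in $\Kk$, and then adapt the resulting DWL algorithms to the distinguishing framework of \autoref{def:PC_distinguishInDWL}. Given $G, H \in \Kk$ with $G \not\cong H$, let $\Pi_{G,H} \in \CPT(p(n))$ be the distinguishing sentence, with $G \models \Pi_{G,H}$ and $H \not\models \Pi_{G,H}$. By the equivalence of CPT and DWL, there is a polynomial-time DWL algorithm $M_{G,H}$ that outputs $1$ on any input $A$ with $A \models \Pi_{G,H}$ and $0$ otherwise.

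From $M_{G,H}$ I would construct a normalised DWL algorithm $M'_{G,H}$ that takes $G \uplus H$ as input and internally simulates $M_{G,H}$ in parallel on each connected component. This parallel simulation is enabled by \autoref{lem:restrictionOfSketch}: at any stage the current cloud structure $G' \uplus H'$ lets us read two separate sketches $D(G')$ and $D(H')$ from $D(G' \uplus H')$, one per component. Using these as virtual inputs to two independent copies of $M_{G,H}$, I would step through the computation: whenever both copies issue an operation $\pairOp(R)$ or $\sccOp(R)$ based on their virtual sketch, $M'_{G,H}$ issues the corresponding operations on $G \uplus H$ restricted to the respective component's pairs of colour $R$. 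Since these operations only modify pairs within a single component, the normalisation invariant is preserved throughout.

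As long as the sketches $D(G')$ and $D(H')$ coincide, both virtual copies of $M_{G,H}$ see identical inputs and, by determinism, make identical decisions, so the simulation proceeds symmetrically on both sides. If the two sketches never diverged over the whole run, both copies would therefore halt in the same accept/reject state, contradicting $M_{G,H}(G) \ne M_{G,H}(H)$. Hence at some step the component-sketches must become distinct, and $M'_{G,H}$ halts at that moment, witnessing $D(G') \ne D(H')$ as required. The total number of steps is bounded by the polynomial runtime of $M_{G,H}$ plus a constant-factor overhead for parallel bookkeeping, so $M'_{G,H}$ runs in polynomial time in $|G|+|H|$.

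I expect the main obstacle to be the colour-level bookkeeping in the parallel simulation. When the two components have locally similar structure, a colour that the two virtual copies of $M_{G,H}$ treat as ``the same'' may in $D(G \uplus H)$ either be represented by one colour spanning pairs of both components (so a single cloud operation suffices) or split across several distinct colours; conversely, one virtual colour may become several after an operation. Maintaining an explicit correspondence between virtual per-component colours and the actual canonical colours in $D(G \uplus H)$, and translating $M_{G,H}$-operations faithfully through this correspondence while preserving normalisation, is the delicate part of the proof.
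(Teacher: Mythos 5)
Your proposal is correct and follows essentially the same route as the paper: invoke the CPT-to-DWL simulation of \cite{dwl}, observe that the deterministic machine's behaviour depends only on the algebraic sketch so the two runs coincide until the component sketches diverge (which they must, since one input is accepted and the other rejected), and replay that common operation sequence on $G \uplus H$ via \autoref{lem:restrictionOfSketch}. The colour-bookkeeping issue you flag is real but is exactly what \autoref{lem:restrictionOfSketch} (Lemma 8 in \cite{dwl}) is invoked to handle; the paper likewise leaves those details implicit.
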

	%\todo[inline]{TODO: only sketch this proof. Rest in appendix}
	\textit{Proof sketch.} 	Let $p(n)$ be the resource bound for the distinguishing CPT-programs for the class $\Kk$ that exists by \autoref{def:distinguishingInCPT}.
	Fix two $\tau$-structures $G, H \in \Kk$ such that $G \not\cong H$. Let $\Pi \in \CPT(p(n))$ be a distinguishing sentence. By Theorem 21 in \cite{dwl}, there exists a polynomial time DWL-algorithm $M$ which simulates $\Pi$ (and the polynomial resource bound of $M$ depends only on $p(n)$, not on $G$ and $H$). That means $M$ w.l.o.g.\ accepts $G$ and rejects $H$. The sequence of executed cloud-interaction operations from $\{ \pairOp, \sccOp  \}$ is the same in the run of $M$ on $G$ as in the run on $H$, up to the point where the respective structures in the cloud have distinct algebraic sketches (because the behaviour of $M$ only depends on the algebraic sketch of the structure in the cloud). At that point, we can stop the simulation of $\Pi$ by $M$ because we do not actually care about the acceptance behaviour as long as the machine produces distinct sketches on $G$ and $H$. Now the same sequence of cloud-interaction-operations can be simulated by an appropriate DWL-algorithm $M'$ on input $G\uplus H$, which leads to a structure $G' \uplus H'$ with $D(G') \neq D(H')$. Such an $M'$ can be constructed because of \autoref{lem:restrictionOfSketch}.  \qedsymbol

	\section{Properties of coherent configurations}
	\label{sec:coherentConfigurations}
	Here is a small collection of lemmas concerning coherent configurations. We will need them in our construction of the $\EPC_3$-refutation in the next section.
	\begin{lemma}
		\label{lem:PC_endpointsColour}
		Let $A$ be a $\tau$-structure and $C(A)$ its coarsest coherent $\sigma$-colouring. Let $R \in \sigma$. There are diagonal colours $D_1, D_2 \in \sigma$ such that for all pairs $(u,v) \in R(C(A))$ we have $(u,u) \in D_1(C(A))$, and $(v,v) \in D_2(C(A))$. 
	\end{lemma}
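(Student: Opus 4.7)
The plan is to exploit the intersection numbers of the coherent configuration $C(A)$ together with the fact (second bullet of \autoref{def:PC_coherentConfig}) that every colour in $\sigma$ is either contained in or disjoint from $\diag(V(C))$. In particular, the diagonal $\diag(V(C))$ is a disjoint union of diagonal colours, so every vertex $u \in V(C(A))$ has a unique diagonal colour $D(u) \in \sigma$ with $(u,u) \in D(u)(C(A))$. I would like to show that $D(u) = D(u')$ whenever $(u,v), (u',v') \in R(C(A))$, and analogously for the right endpoints.

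First I would pick an arbitrary witness pair $(u_0, v_0) \in R(C(A))$, let $D_1 := D(u_0)$, and compute the intersection number $q(R, D_1, R)$. Counting witnesses $w$ with $(u_0, w) \in D_1(C(A))$ and $(w, v_0) \in R(C(A))$: since $D_1$ is contained in the diagonal, $w$ must equal $u_0$, and indeed $w = u_0$ works because $(u_0,u_0) \in D_1(C(A))$ by definition of $D_1$ and $(u_0, v_0) \in R(C(A))$ by choice. Thus $q(R, D_1, R) = 1$.

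Next, the defining property of intersection numbers transfers this count to every other pair of colour $R$. For an arbitrary $(u,v) \in R(C(A))$, there exists a (unique) $w$ with $(u,w) \in D_1(C(A))$ and $(w,v) \in R(C(A))$. Since $D_1 \subseteq \diag(V(C))$, we must have $w = u$, which forces $(u,u) \in D_1(C(A))$. Hence every left endpoint of an $R$-pair lies in the diagonal colour $D_1$.

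The right-endpoint statement follows by a symmetric argument: apply the same reasoning to the converse colour $R^{-1} \in \sigma$ (whose existence is guaranteed by the third bullet of \autoref{def:PC_coherentConfig}), whose pairs are exactly the reversals of the pairs in $R$; the resulting uniform left-endpoint diagonal colour for $R^{-1}$ is then the desired $D_2$. I do not expect any real obstacle here — the argument is a direct unfolding of the axioms — the only point worth being careful about is invoking the partition property to ensure $D_1$ is well-defined and unique for each vertex.
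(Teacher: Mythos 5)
Your argument is correct. The paper does not actually prove this lemma from scratch: it simply cites Corollary~2.1.7 of the Chen--Ponomarenko lecture notes on coherent configurations (where diagonal colours are called ``fibers''). Your proposal supplies the standard self-contained derivation that underlies that corollary, and every step checks out: the partition property plus the second bullet of \autoref{def:PC_coherentConfig} make $D(u)$ well-defined; the computation $q(R,D_1,R)=1$ on the witness pair $(u_0,v_0)$ is right (the only candidate $w$ is $u_0$ itself, and it works); and the defining property of intersection numbers then forces $(u,u)\in D_1(C(A))$ for every $(u,v)\in R(C(A))$, since the unique witness $w$ must again be $u$. Passing to $R^{-1}$ for the right endpoints is also fine, as the left endpoints of $R^{-1}$-pairs are exactly the right endpoints of $R$-pairs (alternatively, one could run the same count with $q(R,R,D_2)$ directly). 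The only cosmetic caveat is the degenerate case $R(C(A))=\emptyset$, where the statement is vacuous and no witness pair is needed. So your proof is a legitimate, elementary replacement for the paper's external citation.
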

	\begin{proof}
		This is Corollary 2.1.7 in \cite{chen2019lectures} (the term ``fibers'' there means the same as our ``diagonal colours'').
	\end{proof}

	\begin{corollary}
		\label{cor:PC_endpointsColour}
		Let $A$ be a $\tau$-structure and $C(A)$ its coarsest coherent $\sigma$-colouring. Let $R \in \sigma$. If for any $(u,v) \in R(C(A))$, $(u,u)$ or $(v,v)$ is in some relation $E(A)$, for $E \in \tau$, then for all other $(u',v') \in R(C(A))$, it also holds $(u',u') \in E(A)$, or $(v',v') \in E(A)$, respectively.
	\end{corollary}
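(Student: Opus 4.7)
The plan is to deduce the corollary as an essentially routine consequence of the preceding lemma together with the defining property of the refinement relation $C(A) \sqsubseteq A$. First I would invoke Lemma \ref{lem:PC_endpointsColour} to obtain two diagonal colours $D_1, D_2 \in \sigma$ with the property that \emph{every} pair $(x,y) \in R(C(A))$ satisfies $(x,x) \in D_1(C(A))$ and $(y,y) \in D_2(C(A))$. The whole point is that the ``colour at the endpoints'' of an $R$-pair is determined by $R$ alone, so the information about the diagonal types is uniform across $R(C(A))$.

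Next I would use the definition of $C(A) \sqsubseteq A$: for each $\sigma$-colour $R' \in \sigma$ and each $\tau$-relation $E \in \tau$, either $R'(C(A)) \subseteq E(A)$ or $R'(C(A)) \cap E(A) = \emptyset$. Applied to the diagonal colours $D_1$ and $D_2$, this says that every diagonal colour class is either entirely inside $E(A)$ or entirely outside it.

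Combining the two observations gives the conclusion. Suppose $(u,v) \in R(C(A))$ and $(u,u) \in E(A)$. By the first step, $(u,u) \in D_1(C(A))$, so $D_1(C(A)) \cap E(A) \neq \emptyset$, and hence by the refinement dichotomy $D_1(C(A)) \subseteq E(A)$. Then for any other $(u',v') \in R(C(A))$ the first step again yields $(u',u') \in D_1(C(A))$, whence $(u',u') \in E(A)$. The case $(v,v) \in E(A)$ is symmetric via $D_2$.

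I do not expect any real obstacle here: the whole argument is the one-line combination ``uniform diagonal colour at endpoints $+$ coherent refinement of $A$''. The only potential pitfall is notational, namely making sure that the diagonal colours $D_1, D_2$ supplied by Lemma \ref{lem:PC_endpointsColour} are indeed members of $\sigma$ (and therefore eligible for the refinement dichotomy), which is exactly what the cited corollary from \cite{chen2019lectures} guarantees.
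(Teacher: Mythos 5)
Your proof is correct and is exactly the argument the paper intends: the paper's own proof is the one-line remark ``follows from the previous lemma and the fact that the coarsest coherent configuration of $A$ is a refinement of the relations of $A$,'' and you have simply spelled out that combination in full detail.
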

	\begin{proof}
		Follows from the previous lemma and the fact that the coarsest coherent configuration of $A$ is a refinement of the relations of $A$.
	\end{proof}
	
	\begin{lemma}
		\label{lem:PC_SCCsSingleColoured}
		Let $A$ be a $\tau$-structure and let $C(A)$ be its coarsest coherent \\$\sigma$-configuration. Let $R \in \sigma$ and let $\Ss = \sccs(R)$ be the set of $R$-SCCs in $C(A)$. There is a diagonal relation $P \in \sigma$ such that $\diag(\bigcup \Ss) = P(C(A))$.
	\end{lemma}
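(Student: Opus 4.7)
The plan is to show that $\diag(\bigcup \Ss)$ coincides with a single diagonal colour class in $\sigma$. If $R$ itself is a diagonal colour, then each $(u,u) \in R(C(A))$ is a singleton $R$-SCC (and no other vertex lies in any $R$-SCC), so the lemma holds immediately with $P := R$. The interesting case is when $R$ is off-diagonal, which I address next.

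In that case I would first apply \autoref{lem:PC_endpointsColour} to obtain diagonal colours $D_1, D_2 \in \sigma$ such that every $(u,v) \in R(C(A))$ satisfies $(u,u) \in D_1(C(A))$ and $(v,v) \in D_2(C(A))$. Since any vertex on an $R$-cycle is simultaneously a source and a target of an $R$-edge, its diagonal colour lies in $D_1 \cap D_2$; as the diagonal colours partition the diagonal, this forces $D_1 = D_2 =: D$. Hence $\bigcup \Ss \subseteq \{u : (u,u) \in D(C(A))\}$, and I would set $P := D$; only the reverse inclusion remains.

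For this, I plan a regularity and double-counting argument driven by the intersection numbers. The values $q(D,R,R^{-1})$ and $q(D,R^{-1},R)$ force every vertex of diagonal colour $D$ to have the same $R$-out-degree $s$ and $R$-in-degree $s'$; since all $R$-edges lie between $D$-vertices, a global edge count yields $s = s' \geq 1$ ($R$ being a nonempty partition block). Now fix $u$ with $(u,u) \in D(C(A))$ and let $T$ be the set of vertices reachable from $u$ by $R$-paths of length $\geq 0$. Then $T$ is forward-closed under $R$ and contributes exactly $|T| \cdot s$ outgoing $R$-edges, all landing in $T$; but the vertices of $T$ can absorb at most $|T| \cdot s' = |T| \cdot s$ incoming $R$-edges in total, so every $R$-in-edge into $T$ must originate in $T$. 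In particular each $R$-in-neighbor of $u$ lies in $T$, and since $R$ is off-diagonal such an in-neighbor $v$ is distinct from $u$ and reachable from $u$ by an $R$-path of length $\geq 1$; concatenating that path with the edge $(v,u)$ produces an $R$-cycle through $u$, so $u \in \bigcup \Ss$.

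The main obstacle is precisely this last step: turning the local coherence data (constant $R$-in- and $R$-out-degrees on each diagonal class) into the global conclusion that every $D$-vertex lies on some $R$-cycle, via the backward-closure argument for $T$. Once this is in place we conclude $\diag(\bigcup \Ss) = D(C(A))$, completing the proof with $P = D$.
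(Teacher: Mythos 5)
Your proof is correct, and for the reverse inclusion it takes a genuinely different route from the paper. Both proofs handle the inclusion $\diag(\bigcup\Ss)\subseteq P(C(A))$ identically, via \autoref{lem:PC_endpointsColour} and the observation that a vertex on an $R$-cycle is both a source and a target of an $R$-edge, forcing $D_1=D_2$. For the converse inclusion the paper invokes the correspondence between the coarsest coherent configuration and $\Cc^3$-types together with the $\Cc^3$-definability of $R$-reachability, so that ``lies on an $R$-cycle'' is a property shared by all vertices of the same diagonal colour. You instead argue combinatorially from the intersection numbers: $q(D,R,R^{-1})$ and $q(D,R^{-1},R)$ give constant $R$-out- and in-degrees $s=s'\geq 1$ on the fiber $D$, and the double count over a forward-closed reachability set $T$ shows $T$ is also backward-closed, which produces the required cycle through any $D$-vertex. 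This is a sound and in fact classical argument about coherent configurations; it has the advantage of being self-contained within \autoref{def:PC_coherentConfig} (it does not need the $2$-WL/$\Cc^3$ machinery, and it would apply to any coherent configuration, not only the coarsest one), and it cleanly isolates the diagonal case of $R$ up front. The paper's logical argument is shorter given the $\Cc^3$-type correspondence, which it reuses uniformly in the subsequent lemmas (\autoref{lem:PC_SCCequalSize}, \autoref{lem:PC_differentColours}, \autoref{lem:PC_coloursBetweenSCCsDeterminedByEdges}), whereas your counting technique would need to be redone separately for each of those. The only point both treatments leave implicit is the degenerate case $\Ss=\emptyset$, where no (nonempty) colour class can equal $\diag(\bigcup\Ss)$; this is a harmless edge case of the lemma statement rather than of either proof.
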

	\begin{proof}
		First, we show that $\diag(\bigcup \Ss) \subseteq P(C(A))$.	
		Any vertex $v \in \bigcup \Ss$ has some outgoing $R$-neighbour $w$ that is in the same SCC (possibly, $w = v$). That is, we have $(v,w) \in R(C(A))$. By \autoref{lem:PC_endpointsColour}, there are specific diagonal relations $D_1, D_2$ such that $(v,v) \in D_1(C(A)), (w,w) \in D_2(C(A))$, and all endpoints of $R$-edges have these diagonal colours. But since $w$ is itself the left entry in some other $R$-edge $(w,w')$, we must have $D_1 = D_2 =: P$.\\
		It remains to prove $P(C(A)) \subseteq \diag(\bigcup \Ss)$. For any $v \in \diag(\bigcup \Ss)$, there exists an $R$-path of length $\geq 1$ from $v$ to itself. We have already argued that $(v,v) \in P(C(A))$. It follows that any other vertex $w$ with $(w,w) \in P(C(A))$ also has an $R$-path to itself and is thus in $\bigcup \Ss$. To see this, recall that $C(A)$ corresponds to the stable $2$-\WL -colouring \cite{dwl}, which in turn partitions $A^2$ into $\Cc^3$-types \cite{sandraSiglog}. Hence, all vertices with the same diagonal colour satisfy exactly the same $\Cc^3$-formulas. 
		The existence of an $R$-path from a vertex to itself (in the fixed structure $A$) is expressible in $\Cc^3$ using standard techniques: Namely, for every fixed number $d \leq |A|$, we can write a $\Cc^3$ formula $\phi_d(x,y)$ that asserts the existence of a path of length $d$ from $x$ to $y$. Only $3$ variables are needed because one can alternately requantify used variables (see e.g.\ Proposition 3.2 in \cite{immermanLander}). In the formula, we have access to the relation $R$ because it is itself $\Cc^3$-definable: Essentially, $R$ is a $\Cc^3$-type of vertex-pairs in $A$, and it is known that on finite structures, such a type is definable with a single formula.
	\end{proof}

	\begin{corollary}
		\label{cor:PC_SCCsSingleVertexColoured}
		Let $A$ be a $\tau$-structure such that for every $v \in V(A)$, $(v,v)$ is in exactly one diagonal relation $P(A)$, for $P \in \tau$ (i.e.\ $A$ is a graph with vertex colours).
		Let $C(A)$ be the coarsest coherent $\sigma$-configuration of $A$. Let $R \in \sigma$ and let $\Ss = \sccs(R)$. There is a colour (i.e.\ a diagonal relation) $P \in \tau$ such that for every SCC $S \in \Ss$, $\diag(S) \subseteq P(A)$.
	\end{corollary}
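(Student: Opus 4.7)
The plan is to combine Lemma \ref{lem:PC_SCCsSingleColoured}, which gives a single $\sigma$-diagonal colour covering the union of all $R$-SCCs, with the refinement property $C(A) \sqsubseteq A$, which forces every $\sigma$-diagonal relation to sit inside exactly one of the vertex colours of $\tau$.

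First, I would invoke Lemma \ref{lem:PC_SCCsSingleColoured} applied to $A$ and the colour $R \in \sigma$. It produces a diagonal relation $P' \in \sigma$ such that
\[
\diag\!\Big(\bigcup \Ss\Big) \;=\; P'(C(A)).
\]
In particular, $(v,v) \in P'(C(A))$ for every vertex $v$ that lies in some $R$-SCC.

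Next, I would use the assumption that every $(v,v) \in \diag(V(A))$ lies in exactly one diagonal relation $P(A)$ with $P \in \tau$. Equivalently, the $\tau$-diagonals partition $\diag(V(A))$. Since $C(A)$ refines $A$, the relation $P'(C(A))$ must be a subset of (or disjoint from) every $E(A)$ for $E \in \tau$. As $P'(C(A)) \subseteq \diag(V(A))$, it cannot be disjoint from every $\tau$-vertex-colour, and because those colours partition the diagonal of $A$, there is a unique $P \in \tau$ with $P'(C(A)) \subseteq P(A)$. Combining this with the equality from Lemma \ref{lem:PC_SCCsSingleColoured} yields
\[
\diag\!\Big(\bigcup \Ss\Big) \;\subseteq\; P(A),
\]
which immediately gives $\diag(S) \subseteq P(A)$ for every $S \in \Ss$, as desired.

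There is no real obstacle here; the only subtlety is keeping the two vocabularies $\sigma$ and $\tau$ apart. Lemma \ref{lem:PC_SCCsSingleColoured} provides the statement inside the coherent configuration (vocabulary $\sigma$), whereas the corollary asks for a colour from the original vocabulary $\tau$. The bridge is the one-line observation that a $\sigma$-diagonal relation, being a subset of $\diag(V(A))$, is refined by exactly one of the $\tau$-vertex-colours of $A$.
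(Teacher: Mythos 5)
Your proposal is correct and follows exactly the paper's own argument: apply \autoref{lem:PC_SCCsSingleColoured} to obtain a single $\sigma$-diagonal colour covering $\diag(\bigcup\Ss)$, then use the refinement $C(A)\sqsubseteq A$ to place that colour inside a unique $\tau$-vertex-colour. The paper states this in two lines; your version just makes the same bridge between the vocabularies $\sigma$ and $\tau$ explicit.
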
	
	\begin{proof}
		The coarsest coherent configuration $C(A)$ is a refinement of $A$. Therefore, the diagonal relations in $C(A)$ are subsets of the diagonal relations in $A$. Now the statement follows directly from \autoref{lem:PC_SCCsSingleColoured}.
	\end{proof}

	\begin{lemma}
		\label{lem:PC_SCCequalSize}
		Let $A,C(A), R \in \sigma$, and $\Ss$ be as above. All $R$-SCCs in $\Ss$ are of equal size.
	\end{lemma}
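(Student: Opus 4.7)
The plan is to reduce the statement to a counting fact about $\Cc^3$-types: I want to show that the size of the $R$-SCC containing a vertex $u$ depends only on the diagonal colour of $u$ in $C(A)$, and then invoke \autoref{lem:PC_SCCsSingleColoured} to conclude that every vertex lying in some $R$-SCC shares one and the same diagonal colour, forcing all SCCs to have equal size.

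First I would appeal to \autoref{lem:PC_SCCsSingleColoured} to obtain a single diagonal colour $P \in \sigma$ such that $\diag(\bigcup \Ss) = P(C(A))$. Every vertex $u$ appearing in any SCC of $R$ thus satisfies $(u,u) \in P(C(A))$. Next I would argue that the relation ``$w$ lies in the same $R$-SCC as $x$'' is definable by a $\Cc^3$-formula $\chi(x,w)$. For each fixed length $d \leq |V(A)|$ the existence of an $R$-path of length $d$ from $x$ to $w$ is expressible with only three variables by alternately requantifying the auxiliary variable (the standard trick used already inside the proof of \autoref{lem:PC_SCCsSingleColoured}; see also \cite{immermanLander}), and we have access to $R$ itself inside these formulas because, as noted there, each colour class of $C(A)$ coincides with a $\Cc^3$-type of pairs and is therefore $\Cc^3$-definable. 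Disjoining over all pairs $(d,d')$ of lengths gives $\chi$, and then for every $k \in \bbN$ the $\Cc^3$-formula $\exists^{\geq k} w\, \chi(x,w)$ with the single free variable $x$ expresses ``the $R$-SCC of $x$ has size at least $k$''.

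To finish I would use the correspondence between $C(A)$ and the stable $2$-\WL-colouring: any two vertices $u, u'$ with $(u,u),(u',u') \in P(C(A))$ have pairs of the same $\Cc^3$-type, and hence in particular satisfy exactly the same $\Cc^3$-formulas in one free variable. Applying this to the family $\exists^{\geq k} w\, \chi(x,w)$, I conclude that the SCCs of $u$ and $u'$ have the same cardinality. Combined with the first step, this yields that all members of any SCC in $\Ss$ have diagonal colour $P$, so every SCC has the same size.

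The main obstacle I anticipate is the formal verification that ``same $R$-SCC'' is genuinely expressible in $\Cc^3$ --- reusing only three variables along paths of unbounded length, and accessing the colour $R$ via a $\Cc^3$-formula --- as well as the clean invocation of the fact that the $\Cc^3$-type of a single element $u$ is determined by the colour of the diagonal pair $(u,u)$ in $C(A)$. Both are essentially folklore and are the same tools already used in the proof of \autoref{lem:PC_SCCsSingleColoured}, so I would reuse that machinery rather than redevelop it. An alternative, more combinatorial route would be to count directly via intersection numbers: for $R' \in \sigma$ the number of $w$ with $(u,w) \in R'(C(A))$ equals $\sum_{R'' \in \sigma} q(P, R', R'')$, which is independent of $u$ with $(u,u)\in P(C(A))$; summing over the subset $T \subseteq \sigma$ of colours whose pairs connect $u$ to same-SCC vertices yields $|\text{SCC}(u)|$ and shows it depends only on $P$. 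I would mention this as a backup if the $\Cc^3$-route runs into trouble.
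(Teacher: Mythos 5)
Your proof is correct and follows essentially the same route as the paper's: express the size of the $R$-SCC of $x$ by a $\Cc^3$-formula using a counting quantifier over the ($\Cc^3$-definable) same-SCC relation, then invoke \autoref{lem:PC_SCCsSingleColoured} and the colour/$\Cc^3$-type correspondence to conclude all SCC-vertices satisfy the same such formulas. The intersection-number backup you sketch is a nice, more combinatorial alternative, but the main argument already matches the paper.
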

	\begin{proof}
		For any number $k \in \bbN$, we can write a $\Cc^3$-formula $\phi_k(x)$ asserting that the size of the $R$-SCC of $x$ is exactly $k$. To do this, we can just use a counting quantifier and the fact that the existence of an $R$-path between two vertices (and back) is $\Cc^3$-definable (see proof of \autoref{lem:PC_SCCsSingleColoured}). Now since all vertices in $R$-SCCs have the same diagonal colour (\autoref{lem:PC_SCCsSingleColoured}), and colours coincide with $\Cc^3$-types, they all satisfy the same $\phi_k$ and thus, all SCCs have equal size.
	\end{proof}
	
	\begin{lemma}
		\label{lem:PC_differentColours}
		Let $A,C(A), R \in \sigma$ and $\Ss$ be as above.
		There is a collection of colours $R_1,...,R_t \in \sigma$ such that $R^\scc = \bigcup_{S \in \Ss} S^2 = \bigcup_{i =1}^t R_i(C(A))$.
	\end{lemma}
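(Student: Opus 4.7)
My plan is to show that $R^\scc$ is closed under $\sigma$-colours, i.e.\ that whenever two pairs $(u,v),(u',v')$ lie in the same relation $R'(C(A))$, then $(u,v) \in R^\scc$ if and only if $(u',v') \in R^\scc$. Since the colours in $\sigma$ partition $V(A)^2$ by definition of a coherent configuration, this immediately yields the desired decomposition $R^\scc = \bigcup_{i=1}^t R_i(C(A))$ by collecting all colours $R_i$ whose pairs belong to $R^\scc$.

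The key tool is the correspondence between the coarsest coherent configuration $C(A)$ and the $\Cc^3$-types of pairs: $C(A)$ is the stable $2$-WL colouring of $A$, and two pairs receive the same colour precisely if they have the same $\Cc^3$-type (as already used in the proofs of \autoref{lem:PC_SCCsSingleColoured} and \autoref{lem:PC_SCCequalSize}). So the whole argument reduces to showing that, for the fixed structure $A$, the binary relation ``$(x,y)$ are in the same $R$-SCC'' is $\Cc^3$-definable.

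To define this, I would reuse the construction sketched in the proof of \autoref{lem:PC_SCCsSingleColoured}: for each $d \leq |A|$ there is a $\Cc^3$-formula $\phi_d^R(x,y)$ expressing ``there is an $R$-path of length $d$ from $x$ to $y$'', obtained by the standard trick of re-quantifying the third variable alternately (cf.\ Proposition~3.2 in \cite{immermanLander}). The relation $R$ itself is accessible in $\Cc^3$ because it corresponds to a single $\Cc^3$-type of pairs and is therefore definable by a single $\Cc^3$-formula. Then $(u,v) \in R^\scc$ iff there is an $R$-path of length $\geq 1$ from $u$ to $v$ \emph{and} one from $v$ to $u$; hence $R^\scc$ is defined by $\psi(x,y) := \bigvee_{d=1}^{|A|} \phi_d^R(x,y) \wedge \bigvee_{d=1}^{|A|}\phi_d^R(y,x)$. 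Note that this also correctly captures the diagonal case: $(u,u) \in R^\scc$ iff some $R$-cycle through $u$ exists, which is precisely $\psi(u,u)$.

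The remaining step is routine: two pairs of equal colour in $C(A)$ satisfy the same $\Cc^3$-formulas, in particular $\psi$, so $R^\scc$ is a union of colour classes. I do not foresee a serious obstacle here; the only subtlety worth stating explicitly is that $\psi$ depends on the size of $A$, which is fine because the lemma concerns a fixed structure $A$ and a fixed relation $R \in \sigma$ of its coherent configuration, not a uniform bound across a class of structures.
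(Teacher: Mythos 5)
Your proposal is correct and follows essentially the same route as the paper: both arguments reduce the claim to the $\Cc^3$-definability of the ``same $R$-SCC'' relation via the path formulas from the proof of \autoref{lem:PC_SCCsSingleColoured}, and then use the fact that colours of $C(A)$ coincide with $\Cc^3$-types to conclude that $R^\scc$ is a union of colour classes. Your explicit formula $\psi(x,y)$ and the remark about the diagonal case are just a slightly more detailed write-up of what the paper does.
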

	\begin{proof}
		We let $R_1,...,R_t$ be the smallest collection of colours such that every $(u,v) \in \bigcup_{S \in \Ss} S^2$ occurs in one of them. To see that this has the desired property, let $(u,v) \in V(A)^2$ be such that $u$ and $v$ are not in the same SCC. 
		Let $T \in \sigma$ be the colour such that $(u,v) \in T(C(A)))$. There is no $R$-path from $u$ to $v$ and back. As already argued in the proof of \autoref{lem:PC_SCCsSingleColoured}, this fact is expressible in $\Cc^3$. Since there does exist an $R$-path in both directions between any two vertices inside each SCC, and colours coincide with $\Cc^3$-types, $T$ cannot be among the $R_i$.
	\end{proof}

	The next lemma tells us that the colour of a pair $(v',z)$ between some vertex $v'$ and any other vertex $z$ inside a given SCC contains the information whether or not there exists an edge from $v'$ into the SCC. In particular, the colour ``between $v'$ and the SCC'' is independent of the choice of the vertex $z$ in the SCC.
	This also explains why contracting SCCs is possible without loss of information.
	
	\begin{lemma}
		\label{lem:PC_coloursBetweenSCCsDeterminedByEdgesVersion1}
		Let $A,C(A), R \in \sigma$ and $\Ss$ be as above. Fix any relation symbol $E \in \tau$. Let $V,W \in \Ss$ and $v',w' \in V(A)$ be such that:\\
		There is a $v \in V$ such that $(v',v) \in E(A)$, and for all $w \in W$ it holds $(w',w) \notin E(A)$. \\
		Let $z \in V$ be arbitrary and $T \in \sigma$ such that $(v',z) \in T(C(A))$. Then $T(C(A)) \cap (\{w'\} \times W) = \emptyset$.
	\end{lemma}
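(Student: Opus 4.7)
The plan is to reuse the guiding principle of the earlier lemmas in this section: the colour classes of $C(A)$ coincide with the $\Cc^3$-types of vertex-pairs in $A$ (as cited in the proof of \autoref{lem:PC_SCCsSingleColoured} via the correspondence between stable $2$-WL colourings and $\Cc^3$-types). Hence, to show that $(v',z)$ and any pair $(w',w)$ with $w \in W$ receive different colours in $C(A)$, it suffices to exhibit a $\Cc^3$-formula $\phi(x,y)$ that holds at $(v',z)$ in $A$ but fails at every $(w',w)$ with $w \in W$.

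The natural candidate is $\phi(x,y) := \exists y' \bigl( E(x,y') \wedge \rho_R(y,y') \bigr)$, where $\rho_R(y,y')$ is the formula stating that there exist $R$-paths of length at least $1$ in both directions between $y$ and $y'$, i.e.\ that $y$ and $y'$ lie in the same $R$-SCC. As already observed in the proof of \autoref{lem:PC_SCCsSingleColoured}, $\rho_R$ is expressible in $\Cc^3$ by a bounded disjunction over possible path lengths, using the standard requantification trick with a third scratch variable; in $\phi$, the three slots $x$, $y$, $y'$ are used at the top level and inside $\rho_R$ the slot $x$ can be recycled since its value is no longer needed after $E(x,y')$ has been evaluated.

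Verifying the two directions is then straightforward. For $(v',z)$, the witness $y' := v \in V$ works: $(v',v) \in E(A)$ by hypothesis, and $v \in V$ is in the $R$-SCC of $z$, so $\rho_R(z,v)$ holds and $A \models \phi(v',z)$. Conversely, if $A \models \phi(w',w)$ for some $w \in W$, the witness $y'$ would need to lie in the $R$-SCC of $w$, namely $W$, while also satisfying $E(w',y')$; but the hypothesis says $(w',w'') \notin E(A)$ for every $w'' \in W$, contradiction. Hence $(v',z)$ and $(w',w)$ have distinct $\Cc^3$-types, so distinct colours in $C(A)$, yielding $T(C(A)) \cap (\{w'\} \times W) = \emptyset$.

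The only delicate point I anticipate is confirming the three-variable budget for $\phi$, since $\rho_R$ internally performs an unbounded-depth path decomposition; but this is exactly the situation handled in \autoref{lem:PC_SCCsSingleColoured}, so I would simply cite that argument rather than redoing the requantification explicitly. The rest is bookkeeping.
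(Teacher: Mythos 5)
Your proof is correct and is essentially identical to the paper's: both exhibit the $\Cc^3$-formula ``there exists $y'$ in the same $R$-SCC as $y$ with $E(x,y')$'', verify it separates $(v',z)$ from every $(w',w)$, and conclude via the coincidence of colours with $\Cc^3$-types, citing the earlier lemma for the three-variable expressibility of the SCC relation. Your extra remark about recycling the variable $x$ inside $\rho_R$ is a valid (and slightly more explicit) justification of the variable budget.
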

	\begin{proof}
		There is a $\Cc^3$-formula $\phi(x,y)$ that asserts: There exists some vertex $y'$ in the same $R$-SCC as $y$ such that $(x,y') \in E(A)$. This formula can be constructed as described in the proof of \autoref{lem:PC_SCCsSingleColoured}. Since $\phi(x,y)$ is satisfied in $A$ for $x \mapsto v'$ and $y \mapsto z$, for any $z \in V$, but not for $x \mapsto w'$ and $y \mapsto w$, for any $w \in W$, the lemma follows again from the fact that pairs with distinct $\Cc^3$-types receive distinct colours in $C(A)$.
	\end{proof}

	The next lemma is of a similar kind. It states that the colours of pairs between different SCCs contain the information whether or not there exist edges between the two SCCs.

	\begin{lemma}
		\label{lem:PC_coloursBetweenSCCsDeterminedByEdges}
		Let $(A,C(A))$, $R \in \sigma$ and $\Ss$ be as above. Fix any relation symbol $E \in \tau$. Let $V,W,V',W' \in \Ss$ be such that: There exists $v' \in V', v \in V$ such that $(v',v) \in E(A)$, and for all $w' \in W'$, all $w \in W$, $(w',w) \notin E(A)$.\\
		Let $z \in V, z' \in V'$ be arbitrary, and let $T \in \sigma$ such that $(z',z) \in T(C(A))$. Then $T(C(A)) \cap (W' \times W) = \emptyset$.
	\end{lemma}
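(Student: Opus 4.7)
The plan is to mirror the argument of \autoref{lem:PC_coloursBetweenSCCsDeterminedByEdgesVersion1}, but with a $\Cc^3$-formula that lets both endpoints of the pair vary inside their respective SCCs instead of just one endpoint. Since the coarsest coherent configuration $C(A)$ realises the $\Cc^3$-type partition of $V(A)^2$, it suffices to exhibit a $\Cc^3$-formula $\phi(x,y)$ that holds at $(z',z)$ but fails at every pair in $W'\times W$; this immediately yields the desired $T(C(A))\cap(W'\times W)=\emptyset$.

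Concretely, I would take
\[
\phi(x,y)\;:=\;\exists x'\Bigl(\mathrm{SCC}_R(x,x')\;\wedge\;\exists y'\bigl(\mathrm{SCC}_R(y,y')\wedge E(x',y')\bigr)\Bigr),
\]
where $\mathrm{SCC}_R(u,v)$ abbreviates ``$u$ and $v$ lie in a common $R$-SCC''. On input $(x,y)=(z',z)$ the formula is witnessed by $x'=v'\in V'=\mathrm{SCC}(z')$ and $y'=v\in V=\mathrm{SCC}(z)$, which satisfy $E(v',v)$ by assumption. On any input $(x,y)=(w',w)$ with $w'\in W'$ and $w\in W$, the formula is false, since valid witnesses would have to lie in $W'$ and $W$ respectively, and the hypothesis forbids any $E$-edge between these two sets. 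Hence $(z',z)$ and $(w',w)$ have distinct $\Cc^3$-types, and therefore distinct $\sigma$-colours in $C(A)$.

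The one step that needs care is verifying that $\phi$ really lies in $\Cc^3$. Following the recipe used in the proof of \autoref{lem:PC_SCCsSingleColoured}, each $\mathrm{SCC}_R(u,v)$ is the conjunction $\mathrm{path}_R(u,v)\wedge\mathrm{path}_R(v,u)$, and each bounded-length $R$-path is expressible with three variables by alternately requantifying the third variable along the path (Proposition~3.2 in \cite{immermanLander}). To keep $\phi$ itself within three variable names, I would reuse $x$ as the name of the inner existentially bound variable $y'$, writing
\[
\phi(x,y)\;=\;\exists z\Bigl[\mathrm{SCC}_R(x,z)\wedge\exists x\bigl(E(z,x)\wedge\mathrm{SCC}_R(y,x)\bigr)\Bigr].
\]
In the outer subformula $\mathrm{SCC}_R(x,z)$ the free variable $y$ (which is not referenced there) can serve as the temporary path variable, and in the inner subformula $\mathrm{SCC}_R(y,x)$ the now unused $z$ plays that role. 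Thus only the three names $\{x,y,z\}$ are ever active, so $\phi\in\Cc^3$. This variable-bookkeeping is the main obstacle; once it is settled, the remainder of the argument is a direct transcription of the proof of \autoref{lem:PC_coloursBetweenSCCsDeterminedByEdgesVersion1}.
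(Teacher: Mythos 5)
Your proof is correct and follows essentially the same route as the paper: the paper likewise reduces the claim to exhibiting the $\Cc^3$-formula ``there exist $x'$ in the same $R$-SCC as $x$ and $y'$ in the same $R$-SCC as $y$ with $(x',y')\in E$'' and then invokes the correspondence between $\sigma$-colours and $\Cc^3$-types, exactly as in \autoref{lem:PC_coloursBetweenSCCsDeterminedByEdgesVersion1}. Your explicit variable-reuse bookkeeping is a correct elaboration of a step the paper leaves implicit.
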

	\begin{proof}
		Analogous to the proof of \autoref{lem:PC_coloursBetweenSCCsDeterminedByEdgesVersion1}. Here, we use a $\Cc^3$-formula $\phi(x,y)$ that asserts: There exists $x'$ in the same SCC as $x$, and $y'$ in the same SCC as $y$ such that $(x',y') \in E(A)$.
	\end{proof}

	\section{Refuting graph isomorphism in  the extended polynomial calculus - Proof of Theorem \ref{thm:PC_mainInformal}}	
	
	\label{sec:PC_mainProof}	
	Let $\Kk$ be a class of connected binary structures such that $\CPT$ distinguishes all structures in $\Kk$.
	By \autoref{lem:connectionDWLdistinguishabilityCPT}, then also DWL distinguishes all structures in $\Kk$ in polynomial time. 
	Now \autoref{thm:PC_mainInformal} follows from \autoref{lem:PC_transferToProofComplexity} below that establishes the link between DWL-distinguishability and the extended polynomial calculus. 
	Before we can prove \autoref{lem:PC_transferToProofComplexity}, we have to state the key technical result that it depends on: 
	\begin{restatable}{lemma}{restateAxiomDerivationLemma}
		\label{lem:PC_constructionOfNewAxiomSystem}
		Let $G,H$ be two connected binary $\tau$-structures, which are potentially vertex-coloured in such a way that for every vertex-colour $Q$, there are as many vertices with colour $Q$ in $G$ as in $H$.
		Let $\op \in \{ \pairOp , \sccOp \}$ and let $R \in \sigma$, where $\sigma$ is the vocabulary of the coarsest coherent configuration $C:= C(G \uplus H)$. Assume that $R(C) \subseteq V(G)^2 \cup V(H)^2$.\\
		Let $G' \uplus H'$ be the result of executing $\op(R)$ on $G \uplus H$.\\
		\\
		Then the polynomial axiom system $\Piso(G',H')$ is derivable from $\Piso(G,H)$ in $\EPC_3$, up to a renaming of variables. The number of extension variables used in the derivation is at most $|V(G')|^2$, and the derivation has polynomial size and uses only coefficients with polynomial bit-complexity. Moreover, for every extension axiom $\frac{}{X_f -f}$ used in the derivation, $f$ is of the form $f = X \cdot Y$ or $f = \frac{1}{n} \cdot \Big(\sum_{i=1}^{n^2} X_i\Big)$.
	\end{restatable}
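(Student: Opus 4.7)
The plan is to split by the two cases $\op = \pairOp(R)$ and $\op = \sccOp(R)$. In each case, I first introduce extension variables that encode the new isomorphism indicators for the fresh vertices in $V(G') \setminus V(G)$ and $V(H') \setminus V(H)$, and then derive each of the three families of $\Piso(G',H')$-axioms within degree $3$. The uniform workhorse is \autoref{lem:PC_groheberkholzlemma}: whenever I need a product $X_{vw}X_{v'w'}$ (or a single $X_{vw}$) to vanish, I will verify via the lemmas of Section~\ref{sec:coherentConfigurations} that the pair $(v,v')$ and the pair $(w,w')$ (or $(v,v)$ and $(w,w)$) carry distinct $\sigma$-colours in $C$, invoke \autoref{lem:PC_equivWLPebble} to conclude that Spoiler wins the bijective $3$-pebble game from the corresponding position, and extract a polynomial-size $\MC_3 \subseteq \EPC_3$ derivation of the monomial from the construction in~\cite{berkholz}.

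For $\op = \pairOp(R)$, the new vertices are the pairs in $R(C) \cap V(G)^2$ and $R(C) \cap V(H)^2$. By \autoref{cor:PC_endpointsColour} both endpoints of any $R$-edge carry fixed $\tau$-colours common to $G$ and $H$, so for each matched pair of new vertices I may legitimately introduce $X_{(u_1,u_2),(w_1,w_2)} := X_{u_1w_1} \cdot X_{u_2w_2}$ as an extension axiom of the form $X \cdot Y$. The new sum axioms \eqref{isoAxiom1} and \eqref{isoAxiom2} are obtained by multiplying the old sum axioms at $w_1$ and $w_2$ and killing the cross-terms $X_{u_1w_1}X_{u_2w_2}$ for $(u_1,u_2) \notin R(C)$ via the $\MC_3$-derivations above. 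The new non-isomorphism axioms split into old-old (inherited from $\Piso(G,H)$), old-new (involving the freshly created relations $E_{\text{left}}, E_{\text{right}}, D_R$), and new-new cases; the crucial discipline in the latter two is to substitute the extension axioms one factor at a time, so that every intermediate polynomial stays in degree at most $3$.

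For $\op = \sccOp(R)$, the new vertices are the $R$-SCCs of $V(G)$ and $V(H)$, of common size $n$ by \autoref{lem:PC_SCCequalSize}. I introduce the averaged extension variable $X_{ST} := \frac{1}{n}\sum_{v \in S, w \in T} X_{vw}$ for each matched pair of new SCC-vertices; this is the second permitted form, with $n^2$ summands. The new sum axioms follow by expanding via the extension axiom, swapping summations, and reducing to the old sum axiom at each $w \in T$; the contribution of any non-SCC vertex $v$ of matching $\tau$-colour is killed by an $\MC_3$-derivation of $X_{vw}$, available since $v$ and $w$ carry distinct diagonal $\sigma$-colours by \autoref{lem:PC_SCCsSingleColoured}. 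The non-isomorphism axioms $X_{ST}X_{S'T'}$ are derived in a three-stage schema: first, Lemmas~\ref{lem:PC_coloursBetweenSCCsDeterminedByEdgesVersion1} and~\ref{lem:PC_coloursBetweenSCCsDeterminedByEdges} supply $\MC_3$-derivations of every $X_{vw}X_{v'w'}$ for the relevant quadruples; second, I sum these over $(v',w')$ and substitute via the extension axiom for $X_{S'T'}$ (multiplied only by the single variable $X_{vw}$) to obtain $X_{vw}X_{S'T'}=0$; third, I sum over $(v,w)$ and substitute via the extension axiom for $X_{ST}$ (multiplied only by the single variable $X_{S'T'}$) to obtain $X_{ST}X_{S'T'}=0$. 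Mixed old-new cases are treated analogously, using \autoref{lem:PC_coloursBetweenSCCsDeterminedByEdgesVersion1}.

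The main obstacle is precisely this degree discipline in the $\sccOp$ case: a direct expansion of $X_{ST}X_{S'T'}$ into the original variables is degree $4$, so the derivation must strictly alternate between batches of $\MC_3$-derivations of degree-$2$ products of original variables and single substitutions via the averaged extension axioms; the two extension variables may never be expanded simultaneously. A secondary concern is the quantitative accounting: the extension variables are in bijection with pairs of new vertices, hence at most $|V(G')|^2$; each invocation of \autoref{lem:PC_groheberkholzlemma} yields a polynomial-size $\MC_3$-derivation; the number of invocations is polynomial in $|V(G \uplus H)|$; and the rational coefficients $1/n$ and $1/n^2$ contribute only $O(\log n)$ bits, keeping the overall bit-complexity polynomial.
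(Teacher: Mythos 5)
Your proposal matches the paper's proof in all essentials: the same choice of extension variables ($X\cdot Y$ for pair-vertices, averaged sums over $\scc(v)\times\scc(w)$ for SCC-vertices), the same use of \autoref{lem:PC_equivWLPebble} together with \autoref{lem:PC_groheberkholzlemma} to derive and cancel the offending degree-$2$ monomials, and the same degree-discipline of substituting one extension axiom at a time when deriving the new-new non-isomorphism axioms (the paper's Case 2.2 is exactly your three-stage schema with the roles of the two extension variables mirrored). The quantitative accounting is also identical, so this is essentially the paper's argument.
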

	
	The proof is quite lengthy and would interrupt the proof of \autoref{thm:PC_mainInformal} at this point; therefore, we first present the lemma and proof that explains how \autoref{thm:PC_mainInformal} follows from \autoref{lem:PC_constructionOfNewAxiomSystem}. Afterwards, we provide the actual polynomial calculus derivations whose existence is claimed in \autoref{lem:PC_constructionOfNewAxiomSystem}. 
	
	\begin{lemma}
		\label{lem:PC_transferToProofComplexity}
		Let $G, H$ be two connected binary $\tau$-structures. Let $p(n)$ be a polynomial and $M$ be a normalised $\DWL$-algorithm which produces on input $G \uplus H$ a structure $G' \uplus H'$ with $D(G') \neq D(H')$, such that the length of the run and the size of the structure in the cloud is bounded by $p(|G| + |H|)$ at any time.\\
		Then the system $\Piso(G,H)$ has an $\EPC_3$-refutation that uses at most $p(|G| + |H|)^3$ many extension variables, has polynomial size and polynomial bit-complexity. Moreover, for every extension axiom $\frac{}{X_f -f}$ used in the derivation, $f$ is of the form $f = X \cdot Y$ or $f = \frac{1}{n} \cdot \Big(\sum_{i=1}^{n^2} X_i\Big)$.
	\end{lemma}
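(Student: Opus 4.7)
The plan is an induction along the run of $M$: each cloud-interaction step of the DWL-algorithm will be simulated by a polynomial-size $\EPC_3$-derivation that transforms the current isomorphism axiom system for the structure in the cloud into the corresponding system for the updated structure. Once the cloud reaches $G'\uplus H'$ with distinct sketches, we invoke the Berkholz--Grohe machinery to finish the refutation.

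In more detail, let $G_0 \uplus H_0 := G \uplus H$ and let $G_i \uplus H_i$ denote the structure in the cloud after the $i$th cloud operation, for $i = 1,\dots,t$ with $t \leq p(|G|+|H|)$ and $G_t\uplus H_t = G'\uplus H'$. We may assume that throughout the computation the two components carry the same number of vertices of every colour, since the first step at which this fails yields a degenerate axiom of type~\eqref{isoAxiom1} or~\eqref{isoAxiom2} (an empty sum minus $1$), from which $1$ is derivable in a constant number of lines of $\PC_1$. Normalisation of $M$ guarantees that every executed operation $\op_i(R_i)$ acts on a colour $R_i$ with $R_i(C(G_{i-1}\uplus H_{i-1})) \subseteq V(G_{i-1})^2 \cup V(H_{i-1})^2$, so Lemma~\ref{lem:PC_constructionOfNewAxiomSystem} applies and gives, for each $i$, an $\EPC_3$-derivation of $\Piso(G_i,H_i)$ from $\Piso(G_{i-1},H_{i-1})$ of polynomial size, polynomial bit-complexity, introducing at most $|V(G_i)|^2 \leq p(|G|+|H|)^2$ fresh extension variables, each of the prescribed form $X\cdot Y$ or $\tfrac{1}{n}\sum_{j=1}^{n^2}X_j$. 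We concatenate these derivations, using disjoint pools of extension variables at different stages so that freshness is preserved globally.

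After $t$ steps we have derived $\Piso(G',H')$ in $\EPC_3$ from $\Piso(G,H)$. Since $D(G')\neq D(H')$, Lemma~\ref{lem:sketchesDistinctSpoilerWins} gives Spoiler a winning strategy in the bijective $3$-pebble game on $G',H'$ starting from the empty position. Lemma~\ref{lem:PC_groheberkholzlemma} then produces an $\MC_3$-derivation of the monomial $X_\emptyset = 1$ from $\Piso(G',H')$ of polynomial size; as $\MC_3 \subseteq \EPC_3$, appending this derivation yields an $\EPC_3$-refutation of $\Piso(G,H)$. The total size is bounded by $t$ times a polynomial, hence polynomial in $|G|+|H|$; the total number of extension variables is bounded by $t \cdot p(|G|+|H|)^2 \leq p(|G|+|H|)^3$; and the bit-complexity remains polynomial, since we only \emph{concatenate} derivations rather than multiplying their coefficients.

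The essentially nontrivial ingredient here is Lemma~\ref{lem:PC_constructionOfNewAxiomSystem}, which is where the extension axioms really do the work: new isomorphism variables must be introduced for the freshly created pair-vertices (for $\pairOp$) or SCC-vertices (for $\sccOp$), and the quadratic and the averaged-sum extension axioms are exactly what is needed to define, and derive the defining properties of, those new variables. Granting that lemma, the only care needed above is bookkeeping---keeping extension-variable pools disjoint across stages, checking that the colour-count hypothesis propagates (or else refuting immediately), and checking that the size and bit-complexity bounds aggregate correctly, all of which are routine.
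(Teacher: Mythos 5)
Your proposal follows essentially the same route as the paper: induct along the run of $M$, apply Lemma~\ref{lem:PC_constructionOfNewAxiomSystem} at each cloud operation to derive $\Piso(G_i,H_i)$ from $\Piso(G_{i-1},H_{i-1})$, and finish via Lemma~\ref{lem:sketchesDistinctSpoilerWins} and Lemma~\ref{lem:PC_groheberkholzlemma} once the sketches differ; the bookkeeping of size, bit-complexity and the $p(n)^3$ bound on extension variables is identical.

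The one point where you diverge is the equal-colour-class-size precondition (needed both by Lemma~\ref{lem:PC_constructionOfNewAxiomSystem} and by Lemma~\ref{lem:PC_groheberkholzlemma}), and your justification there is not correct as stated. A mismatch in the sizes of a vertex-colour class $Q$ between $G_i$ and $H_i$ does \emph{not} generally produce ``an empty sum minus $1$'' among Axioms~\eqref{isoAxiom1} and~\eqref{isoAxiom2}: if $Q$ has $a>0$ vertices in $G_i$ and $b>0$ vertices in $H_i$ with $a\neq b$, every such axiom has a nonempty sum. The mismatch case is still refutable in low degree --- sum Axiom~\eqref{isoAxiom1} over the $b$ vertices of colour $Q$ in $H_i$ and Axiom~\eqref{isoAxiom2} over the $a$ vertices of colour $Q$ in $G_i$; subtracting gives the constant $a-b\neq 0$ --- but that is a different argument from the one you wrote. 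The paper sidesteps this entirely by showing the precondition always holds: for $i<t$ it follows from $D(G_i)=D(H_i)$, and for the final step it argues separately that $\pairOp$ and $\sccOp$ create equally many new vertices in both components (using Lemma~\ref{lem:PC_SCCequalSize} and Lemma~\ref{lem:PC_SCCsSingleColoured} in the $\sccOp$ case). Either repair your degenerate-case derivation as above, or replace it with the paper's argument that the case never arises.
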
	
	\begin{proof}
		Follows from \autoref{lem:PC_constructionOfNewAxiomSystem} together with \autoref{lem:sketchesDistinctSpoilerWins} and \autoref{lem:PC_groheberkholzlemma}. In detail: Let $(\op_i(R_i))_{i \leq t}$ with $\op_i \in \{\pairOp, \sccOp\}$ be the sequence of cloud-interaction-operations in the run of $M$ on $G \uplus H$. This sequence of operations produces a sequence of structures $(G \uplus H, G_1 \uplus H_1, G_2 \uplus H_2, ..., G_t \uplus H_t)$ such that $D(G_t) \neq D(H_t)$.\\
		For each $i$, $R_i$ is a colour in the coarsest coherent configuration of the current structure $A_i := G_i \uplus H_i$ in the cloud. Since $M$ is normalised, $R(C(A_i)) \subseteq V(G_i)^2 \cup V(H_i)^2$. Thus, we can inductively apply \autoref{lem:PC_constructionOfNewAxiomSystem} to derive in $\EPC_3$ polynomial axiom systems $\Piso(G_i,H_i)$ for every $i \in [t]$. 
		The induction requires that for every vertex-colour (i.e.\ diagonal relation), the colour classes always have the same size in $G_i$ and $H_i$ (this is a prerequisite of \autoref{lem:PC_constructionOfNewAxiomSystem}). This is satisfied because $D(G_i) = D(H_i)$, for $i < t$.
		Since $D(G_t) \neq D(H_t)$, it follows from \autoref{lem:sketchesDistinctSpoilerWins} and \autoref{lem:PC_groheberkholzlemma} that the $1$-polynomial is derivable from  $\Piso(G_t,H_t)$ in the degree-$3$ monomial calculus, so in total, it is derivable from $\Piso(G,H)$ in $\EPC_3$. In order to apply \autoref{lem:PC_groheberkholzlemma} to $G_t$ and $H_t$, we need to argue that the vertex-colour-classes are of equal size in both graphs, even though $D(G_t) \neq D(H_t)$. If step $t$ is $\pairOp(R)$, then we introduce equally many new pair-vertices in both graphs because $D(G_{t-1}) = D(H_{t-1})$, so the numbers of $R$-pairs are equal. If step $t$ is $\sccOp(R)$, we also produce the same number of new vertices. Namely, the number of $R$-SCCs in $C(G_{t-1})$ and $C(H_{t-1})$ is equal, because by \autoref{lem:PC_SCCequalSize}, all $R$-SCCs have equal size, and by \autoref{lem:PC_SCCsSingleColoured}, vertices in $R$-SCCs receive the same diagonal colour $P$ distinct from all diagonal colours outside SCCs (and $|P(C(G_{t-1}))| = |P(C(H_{t-1}))|$).\\   
		
		Finally, we bound the number of extension variables used in the derivation of $\Piso(G_t,H_t)$: As stated in \autoref{lem:PC_constructionOfNewAxiomSystem}, for every $i$, the derivation of $\Piso(G_i,H_i)$ from $\Piso(G_{i-1}, H_{i-1})$ uses at most $|V(G_i)|^2$ many new extension variables.
		Therefore, the total number of extension variables that are used in the derivation of $\Piso(G_t,H_t)$ is at most $\sum_{i \in [t]} |V(G_i)|^2$. Since $t \leq p(|G| + |H|)$ and $|V(G_i)| \leq p(|G| + |H|)$, for every $i \in [t]$, this sum is at most $p(|G| + |H|)^3$. Similarly we can bound the size and bit-complexity of the derivation: Each time we invoke \autoref{lem:PC_constructionOfNewAxiomSystem}, we only incur a polynomial cost in size, and this happens polynomially many times. The occurring coefficients can be encoded with polynomially many bits as the lemma asserts. Also, \autoref{lem:PC_constructionOfNewAxiomSystem} uses only extension axioms of the required form.
		Finally, we bound the number of extension variables used in the derivation of $\Piso(G_t,H_t)$: As stated in \autoref{lem:PC_constructionOfNewAxiomSystem}, for every $i$, the derivation of $\Piso(G_i,H_i)$ from $\Piso(G_{i-1}, H_{i-1})$ uses at most $|V(G_i)|^2$ many new extension variables.
		Therefore, the total number of extension variables that are used in the derivation of $\Piso(G_t,H_t)$ is at most $\sum_{i \in [t]} |V(G_i)|^2$. Since $t \leq p(|G| + |H|)$ and $|V(G_i)| \leq p(|G| + |H|)$, for every $i \in [t]$, this sum is at most $p(|G| + |H|)^3$. Similarly we can bound the size and bit-complexity of the derivation: Each time we invoke \autoref{lem:PC_constructionOfNewAxiomSystem}, we only incur a polynomial cost in size, and this happens polynomially many times. The occurring coefficients can be encoded with polynomially many bits as the lemma asserts. Also, \autoref{lem:PC_constructionOfNewAxiomSystem} uses only extension axioms of the required form.
	\end{proof}
	\textit{Proof of \autoref{lem:PC_constructionOfNewAxiomSystem}:}\\
	First, we have to explain how the variables of the new system $\Piso(G',H')$ are encoded as polynomials in the old variables. 
	Recall that the variable set of $\Piso(G,H)$ is 
	\[
	\Vv(G,H) :=   \{ X_{vw} \mid v \in V(G), w \in V(H), v \text{ and } w \text{ have the same vertex-colour}  \}.
	\]
	The intended meaning of $X_{vw}$ is ``$v$ is mapped to $w$''. The graphs $G',H'$ contain new vertices, which either represent contracted $R$-SCCs or pairs of colour $R$.
	The set of vertex-pairs $(v,w)$ for which we need new variables is $\Pairs :=  (V(G') \setminus V(G)) \times (V(H') \setminus V(H))$.\\
	We would like to map each $(v,w) \in \Pairs$ to a polynomial $f(vw)$ such that we can represent variables $X_{vw}$ for $(v,w) \in \Pairs$ as extension variables $X_{f(vw)}$, which we can introduce with the extension axiom $X_{f(vw)} = f(vw)$. If $\op = \pairOp$ and $v$ is a new pair-vertex, then we let $\pair(v)$ be the vertex-pair that $v$ corresponds to. If $\op = \sccOp$ and $v$ is a new SCC-vertex, then we let $\scc(v)$ denote the set of vertices in the SCC that is contracted into $v$. We define $f$ as the following injective mapping $f : \Pairs \to \bbQ[\Vv(G,H)]$. 
	\begin{equation*}
		f(vw) := \begin{cases}
			X_{v_1w_1}X_{v_2w_2} & \text{, if } \op = \pairOp \text{ and }\\  & (v_1,v_2) = \pair(v), (w_1,w_2) = \pair(w). \\
			\frac{1}{|\scc(v)|} \cdot  \sum\limits_{(v',w')\in \scc(v) \times \scc(w)} X_{v'w'}& \text{, if } \op=\sccOp.
		\end{cases}
	\end{equation*} 
	Note that $f(vw)$ is indeed always a polynomial in variables $\Vv(G,H)$: To see this, we have to check that in the pair-case, the vertex-colours of $v_1,w_1$ and of $v_2, w_2$, respectively, are equal, and in the SCC-case, the vertex-colours of all elements of $\scc(v)$ and $\scc(w)$ are equal. In the pair-case, this follows from \autoref{cor:PC_endpointsColour}, and in the SCC-case from \autoref{cor:PC_SCCsSingleVertexColoured}.\\
	
	If $v$ and $w$ are newly introduced pair-vertices with $\pair(v) = (v_1,v_2)$ and $\pair(w) = (w_1,w_2)$, then the variable $X_{vw}$ will be the extension variable for the monomial $X_{v_1w_1}X_{v_2w_2}$. This makes sense because if $X_{vw}$ is set to $1$, then the bijection encoded by the assignment maps $v$ to $w$; but then it also has to map $v_1$ to $w_1$ and $v_2$ to $w_2$. Similarly, if $v$ and $w$ are new SCC-vertices, then any bijection that takes $v$ to $w$ must also map the elements of $\scc(v)$ to the elements of $\scc(w)$ in any possible way. This is reflected in our representation of $X_{vw}$ as the ``average'' over all possible mappings from $\scc(v)$ to $\scc(w)$.\\
	
	Here are the new polynomial axioms that we have to derive in order to go from $\Piso(G,H)$ to $\Piso(G',H')$:
	
	\begin{align}
		\label{liftedIsoAxiom1}
		&\sum_{v \in V(G') \setminus V(G)} X_{f(vw)} - 1 & &\text{ for all } w \in V(H') \setminus V(H).
	\end{align}
	\begin{align}
		\label{liftedIsoAxiom2}
		&\sum_{w \in V(H') \setminus V(H)} X_{f(vw)} - 1 & &\text{ for all } v \in V(G') \setminus V(G).
	\end{align}
	\begin{align}
		\label{liftedIsoAxiom3}
		&X_{f(vw)}X_{v'w'} & &\text{ for all } v,v' \in V(G'), w,w' \in V(H')\\
		& & &\text{ such that } (v,w) \in \Pairs  \notag\\
		& & &\text{ and } v', w' \in V(G) \cup V(H), v' \sim w' \notag\\ 
		& & &\text{ and } \{(v,w),(v',w')\} \text{ is not} \notag\\
		& & & \text{ a local isomorphism.} \notag
	\end{align}
	\begin{align}
		\label{liftedIsoAxiom4}
		&X_{f(vw)}X_{f(v'w')} & &\text{ for all } v,v' \in V(G'), w,w' \in V(H')\\
		& & &\text{ such that } (v,w) \in \Pairs \notag\\
		& & &\text{ and } (v',w') \in \Pairs \notag\\ 
		& & &\text{ and } \{(v,w),(v',w')\} \text{ is not} \notag\\
		& & & \text{ a local isomorphism.} \notag
	\end{align}
	The relation $\sim$ is the same-colour-relation, as in \autoref{def:PC_isoSystem}. Note that Axioms \eqref{liftedIsoAxiom1} and \eqref{liftedIsoAxiom2} only sum over vertices of the same colour as $w$ and $v$, respectively (as Axioms \eqref{isoAxiom1} and \eqref{isoAxiom2} do), because $V(G') \setminus V(G)$ and $V(H') \setminus V(H)$ are the sets of newly added vertices. These vertices receive a new colour distinct from all other vertex colours in $G$ and $H$ (see definition of the DWL-operations in Section \ref{sec:PC_DWLdefinition}). 
	The next step is to verify that $\Piso(G',H')$ is indeed derivable from $\Piso(G,H)$.\\
	\\
	\textbf{Derivation of Axioms \eqref{liftedIsoAxiom1} and \eqref{liftedIsoAxiom2}:}\\
	Fix $w \in V(H') \setminus V(H)$. We show how to derive Axiom \eqref{liftedIsoAxiom1} for $w$.
	Two cases have to be distinguished, namely whether $\op = \pairOp$ or $\op= \sccOp$.\\
	\textbf{\textit{Case 1:} $\op = \pairOp$:}
	Let $(w_1,w_2) = \pair(w)$. Using the multiplication rule and linear combinations, we derive from Axiom \eqref{isoAxiom1} for $w_2$ in $\Piso(G,H)$: 
	\begin{align*}
		\left(\sum\limits_{\stackrel{v_1 \in V(G),}{v_1 \sim w_1}} X_{v_1w_1}\right) &\cdot \left(\sum\limits_{\stackrel{v_2 \in V(G),}{v_2 \sim w_2}} X_{v_2w_2} - 1 \right)\\
		\\
		&=\sum\limits_{\stackrel{v_1,v_2 \in V(G)}{\stackrel{v_1 \sim w_1,}{v_2 \sim w_2}}} X_{v_1w_1}X_{v_2w_2} - \sum\limits_{\stackrel{v_1 \in V(G),}{v_1 \sim w_1}} X_{v_1w_1}
	\end{align*}
	Recall from the statement of \autoref{lem:PC_constructionOfNewAxiomSystem} that $R \in \sigma$ is the colour such that $\op(R)$ is executed to obtain $G' \uplus H'$ from $G \uplus H$. Further, let $C = C(G \uplus H)$.\\
	Since $(w_1,w_2) \in R(C)$, we can use \autoref{lem:PC_equivWLPebble} and \autoref{lem:PC_groheberkholzlemma} to derive from $\Piso(G,H)$ all monomials $X_{v_1w_1}X_{v_2w_2}$ where $(v_1,v_2) \notin R(C)$. Hence, we may cancel these monomials from the above sum with the linear combination rule. This yields:
	\[
	\sum\limits_{\stackrel{v_1,v_2 \in V(G)}{(v_1,v_2) \in R(C) }} X_{v_1w_1}X_{v_2w_2} - \sum\limits_{\stackrel{v_1 \in V(G)}{v_1 \sim w_1}} X_{v_1w_1}.
	\]
	Here, we used that for all pairs $(v_1,v_2) \in V(G)^2 \cap R(C(A))$, it holds that $v_1 \sim w_1$ and $v_2 \sim w_2$. This follows from \autoref{cor:PC_endpointsColour} and the fact that vertex-colours are represented by diagonal relations. Now we are almost done: We add Axiom \eqref{isoAxiom1} for $w_1$ to the above expression and replace each remaining monomial $X_{v_1w_1}X_{v_2w_2}$ with the new extension variable $X_{f(vw)}$, where $v \in V(G')$ is the respective new pair-vertex with $\pair(v) = (v_1,v_2)$. One can see that
	\[
	V(G') \setminus V(G) = \{  v \in V(G') \mid \pair(v) \in R(C) \cap V(G)^2   \}.
	\]
	Thus, we have indeed derived Axiom \eqref{liftedIsoAxiom1} for $w$.\\
	Similarly, we get Axiom \eqref{liftedIsoAxiom2} for a vertex $v \in V(G') \setminus V(G)$ if we perform the same derivations from the Axioms \eqref{isoAxiom2} instead of \eqref{isoAxiom1}.\\
	\\
	\textbf{\textit{Case 2:} $\op = \sccOp$:}
	We derive Axiom \eqref{liftedIsoAxiom1} for a fixed vertex $w \in V(H') \setminus V(H)$. Now $w$ is a vertex that represents a contracted $R$-SCC $\scc(w) \subseteq V(H)$.
	
	For every vertex $w' \in \scc(w)$, we have Axiom \eqref{isoAxiom1} for $w'$ in $\Piso(G,H)$:
	\[
	\sum_{\stackrel{v' \in V(G),}{ v' \sim w'}} X_{v'w'} - 1.
	\]
	Now from this, we may cancel all $X_{v'w'}$ where $(v',v')$ and $(w',w')$ are in distinct diagonal relations in $C$. This is done again by deriving the respective variables $X_{v'w'}$ with \autoref{lem:PC_equivWLPebble} and \autoref{lem:PC_groheberkholzlemma}. After that step, we have for each $w' \in \scc(w)$:
	\[
	\sum_{\stackrel{v' \in V(G),}{ \text{there is } v \in V(G') \setminus V(G) \text{ with } v' \in \scc(v) }} X_{v'w'} - 1.   \tag{$\star$}
	\]
	This holds because the vertex $v' \in V(G)$ has the same diagonal colour as $w' \in \scc(w)$ in the coherent configuration $C$ if and only if it is also contained in some $R$-SCC (\autoref{lem:PC_SCCsSingleColoured}).\\
	
	Next, we use the variable introduction rule and introduce the variables $X_{f(vw)}$ for every $v \in V(G') \setminus V(G)$. That means, we obtain the following polynomials:
	\begin{align*}
		&\frac{1}{|\scc(v)|}\sum\limits_{(v',w')\in \scc(v) \times \scc(w)} X_{v' w'} - X_{f(vw)} & &\text{ for each } v \in V(G') \setminus V(G).
	\end{align*}
	Now take the sum of all polynomials $(\star)$ for all $w' \in \scc(w)$, multiplied by $\frac{1}{|\scc(w)|}$. From this, subtract the above polynomials for all $v \in V(G') \setminus V(G)$. This yields Axiom \eqref{liftedIsoAxiom1} for the vertex $w$ because we have $\frac{1}{|\scc(v)|} = \frac{1}{|\scc(w)|}$ for all $v \in V(G') \setminus V(G)$, since all $R$-SCCs have equal size (\autoref{lem:PC_SCCequalSize}).
	In a similar way we can derive Axiom \eqref{liftedIsoAxiom2} for an SCC-vertex $v \in V(G') \setminus V(G)$.\\
	\\
\textbf{Derivation of Axioms \eqref{liftedIsoAxiom3}:}\\
Let $v,v' \in V(G'), w,w' \in V(H')$ such that $(v,w) \in \Pairs$ and $v',w' \in V(G) \cup V(H)$, and $v' \sim w'$. Furthermore, assume that $\{(v,w),(v',w')\}$ is not a local isomorphism. Since $v$ and $w$ are newly introduced vertices and $v',w'$ are old ones, it holds $v' \neq v$ and $w' \neq w$. Thus, if $\{(v,w),(v',w')\}$ is not a local isomorphism, there must be a relation symbol $E \in \tau$ such that $(v',v) \in E(G)$ and $(w',w) \notin E(H)$, or $(v,v') \in E(G)$ and $(w,w') \notin E(H)$, or vice versa. Again, we have to distinguish two cases:\\
\\
\textbf{\textit{Case 1:} $\op = \pairOp $:}\\
In this case, $v$ and $w$ are new pair-vertices representing pairs $(v_1,v_2)$ and $(w_1,w_2)$, respectively. Therefore, the only non-diagonal relations in which they occur are $E_{\text{left}}$ and $E_{\text{right}}$. Suppose $(v',v) \in E_{\text{left}}(G')$ and $(w',w) \notin E_{\text{left}}(H')$. That means $v' = v_1$ and $w' \neq w_1$. 
We take the extension axiom for $X_{f(vw)}$ and multiply it by $X_{v'w'}$ to obtain $X_{v'w'}(X_{v_1w_1} X_{v_2w_2} - X_{f(vw)})$.
Since $v' = v_1$ and $w' \neq w_1$, the monomial $X_{v'w'}X_{v_1w_1}$ represents a pebble position that is not a local isomorphism and is therefore an axiom in $\Piso(G,H)$. We can thus derive $X_{v'w'}X_{v_1w_1} X_{v_2w_2}$ and cancel it from the polynomial above. Then we multiply by $(-1)$ and are left with Axiom \eqref{liftedIsoAxiom3}, as desired.
Similarly, we can derive the axiom in the case that $(v,v') \in  E_{\text{right}}(G')$ and $(w,w') \notin E_{\text{right}}(H')$. The symmetric cases in which $(w,w')$ or $(w',w)$ is in the respective relation, and $(v,v')$ or $(v',v)$ is not, are analogous.\\ 
\\
\textbf{\textit{Case 2:} $\op = \sccOp$:}
In this case, $v$ and $w$ are contracted $R$-SCCs of $G$ and $H$. Let $E \in \tau$ be a relation symbol such that $(v',v) \in E(G')$ and  $(w',w) \notin E(H')$. Then by definition of $E(G' \uplus H')$ (see Section \ref{sec:PC_DWLdefinition}), there exists a $v_1 \in \scc(v)$ such that $(v',v_1) \in E(G)$, and there is no $w_1 \in \scc(w)$ such that $(w',w_1) \in E(H)$.
In order to derive $X_{f(vw)}X_{v'w'}$, we multiply the extension axiom
\[
\frac{1}{|\scc(v)|}\sum\limits_{(v'',w'')\in \scc(v) \times \scc(w)} X_{v'' w''} - X_{f(vw)}
\]
with $X_{v'w'}$. From the resulting sum, we can cancel all monomials of the form $X_{v'w'}X_{v''w''}$, for all $v'' \in \scc(v), w'' \in \scc(w)$, because $(v', v'')$ and $(w',w'')$ have distinct colours (using again \autoref{lem:PC_equivWLPebble} and \autoref{lem:PC_groheberkholzlemma}). The colours are distinct because there exists an $E$-edge from $v'$ into $\scc(v)$, but none from $w'$ into $\scc(w)$ (see \autoref{lem:PC_coloursBetweenSCCsDeterminedByEdgesVersion1}). After cancelling these monomials, we are left with $X_{f(vw)}X_{v'w'}$.
Again, the symmetric cases work analogously.\\
\\
\textbf{Derivation of Axioms \eqref{liftedIsoAxiom4}:}\\ 
Let $v,v' \in V(G'), w,w' \in V(H')$ such that $(v,w) \in \Pairs$ and  $(v',w') \in \Pairs$. Furthermore, assume that $\{(v,w),(v',w')\}$ is not a local isomorphism. Again, we distinguish between the two operation types:\\
\\
\textbf{\textit{Case 1:} $\op = \pairOp $:}\\
If all four vertices $v,v',w,w'$ are newly created pair-vertices, then $(v,v')$ and $(w,w')$ are not in any relation. Therefore, the only way how $\{(v,w),(v',w')\}$ can fail to be a local isomorphism is if $v = v'$ and $w \neq w'$ (or vice versa).\\
So let $v = v'$ and $\pair(v) = \pair(v') = (v_1,v_2)$. Further, let $\pair(w) = (w_1,w_2)$ and $\pair(w') = (w_1',w_2')$, where $\pair(w) \neq \pair(w')$. Suppose that $w_1 \neq w_1'$ (if $w_2 \neq w_2'$, the derivation is analogous). We multiply the extension axiom for $X_{f(vw)}$ with $X_{v_1w_1'}$ and obtain:
\[
X_{v_1w_1'}(X_{v_1w_1}X_{v_2w_2} - X_{f(vw)}).
\]
Since $w_1' \neq w_1$, the monomial $X_{v_1w_1'}X_{v_1w_1}$ encodes a pebble position which is not a local isomorphism and therefore, it is in $\Piso(G,H)$. Thus, we can derive $X_{v_1w_1'}X_{v_1w_1}X_{v_2w_2}$ and cancel it from the above polynomial, yielding $-X_{v_1w_1'}X_{f(vw)}$. Now multiply this by $X_{v_2w_2'}$ and add the result to the lifted extension axiom $X_{f(vw)}(X_{v_1w_1'}X_{v_2w_2'} -  X_{f(v'w')})$ (recall that $\pair(v') = (v_1,v_2)$). The result, multiplied by $(-1)$, is Axiom \eqref{liftedIsoAxiom4}, namely $X_{f(vw)}X_{f(v'w')}$.\\
Again, the symmetric cases are analogous.\\
\\
\textbf{\textit{Case 2:} $\op = \sccOp $:}\\
In this case, two subcases must be considered because there are two ways in which $\{(v,w),(v',w')\}$ can fail to be a local isomorphism.\\
\\
\textbf{\textit{Case 2.1: Mismatch of equality types.}}\\
Like in the previous case, let $v = v'$ and $w \neq w'$. 
We multiply the extension axiom for $X_{f(vw)}$ with a weighted sum of variables (using the  multiplication and the linear combination rule) to obtain:
\[
\left( \frac{1}{|\scc(v')|}\sum\limits_{v_1' \in \scc(v')}\sum\limits_{w_2 \in \scc(w')} X_{v_1'w_2} \right) \cdot \left(\frac{1}{|\scc(v)|}\sum\limits_{(v_1,w_1) \in \scc(v) \times \scc(w)}  X_{v_1  w_1} - X_{f(vw)}\right).
\]
Because $\scc(w) \cap \scc(w') = \emptyset$, and $\scc(v') = \scc(v)$, $w_1$ and $w_2$ in the above sum are always in distinct SCCs, while $v_1$ and $v_1'$ are in the same SCC. \autoref{lem:PC_differentColours} states that the colours of pairs in the same SCC are distinct from colours of pairs which do not lie in the same SCC. 
Hence, all monomials of the form $X_{v_1'w_2}X_{v_1 w_1}$ are derivable from $\Piso(G,H)$ using \autoref{lem:PC_equivWLPebble} and \autoref{lem:PC_groheberkholzlemma}. Cancelling these monomials from the above sum yields:
\[
\left( \frac{1}{|\scc(v')|}\sum\limits_{v_1' \in \scc(v')} \sum\limits_{w_2 \in \scc(w')} X_{v_1'w_2} \right) \cdot \left(- X_{f(vw)}\right).
\]
With the help of the extension axiom for $X_{f(v'w')}$, we can replace the sum in this expression by $X_{f(v'w')}$ and are done. Again, symmetric cases work analogously.\\
\\
\textbf{\textit{Case 2.2: Mismatch of relations.}}
In this case, the reason why $\{(v,w),(v',w')\}$ is not a local isomorphism is that there is a relation $E \in \tau$ such that $(v',v) \in E(G')$ and $(w',w) \notin E(H')$ (again, we skip the symmetric cases because they are analogous). Then by definition of $E(G' \uplus H')$, there exist $v_1' \in \scc(v')$ and $v_1 \in \scc(v)$ such that $(v_1',v_1) \in E(G)$, but for every pair $(w_1',w_1) \in \scc(w') \times \scc(w)$, it holds $(w_1',w_1) \notin E(H)$.\\
\\
We take the extension axiom for $X_{f(vw)}$ and multiply it with $X_{v_2w_2}$, for all $v_2 \in \scc(v'), w_2 \in \scc(w')$. This yields polynomials of the form (where we now write $v''$ for the vertices in $\scc(v)$ to avoid confusion with the vertex $v'$):
\[
\frac{1}{|\scc(v)|}\sum\limits_{(v'',w'') \in \scc(v) \times \scc(w)} X_{v''w''}X_{v_2w_2} - X_{f(vw)}X_{v_2w_2}. 
\]
We obtain such a polynomial for every $v_2 \in \scc(v'), w_2 \in \scc(w')$.\\
By \autoref{lem:PC_coloursBetweenSCCsDeterminedByEdges}, the pairs $(v'',v_2)$ and $(w'',w_2)$ have distinct colours in the coarsest coherent configuration $C$, for every $v'' \in \scc(v), w'' \in \scc(w)$, because there is an $E$-edge between $\scc(v')$ and $\scc(v)$, but none between $\scc(w')$ and $\scc(w)$. Therefore, each monomial $X_{v''w''}X_{v_2w_2}$ is derivable from $\Piso(G,H)$ and can be cancelled from the above sums.\\
So in total, we can derive: 
\[
- X_{f(vw)}X_{v_2w_2} \text{ , for all } v_2 \in \scc(v'), w_2 \in \scc(w').
\]
We use these monomials to cancel all the summands in the product of the extension axiom for $X_{f(v'w')}$ with the variable $X_{f(vw)}$, which is the following expression:
\[
\frac{1}{|\scc(v')|}\sum\limits_{(v_2,w_2) \in \scc(v') \times \scc(w')} X_{f(vw)}X_{f(v_2w_2)} - X_{f(vw)}X_{f(v'w')}.
\]
Cancelling out the summands as described yields the desired Axiom \eqref{liftedIsoAxiom4}: $X_{f(vw)}X_{f(v'w')}$.\\
	
In total, we can derive $\Piso(G',H')$ from $\Piso(G,H)$. The number of new variables is clearly bounded by $|V(G')|^2$. It is also not difficult to see that only polynomially many monomials occur in the derivation, and the binary encoding of the coefficients occurring in them has complexity at most $\Oo(\log |V(G)|)$. The used extension axioms are all for polynomials that are averaged sums or degree-2 monomials, as mentioned in \autoref{lem:PC_constructionOfNewAxiomSystem}.
The derivations obtained with \autoref{lem:PC_groheberkholzlemma} also have polynomial complexity because they can be carried out in $\MC_3$.

	\section{Discussion and future work}
	We have shown that the degree-$3$ extended polynomial calculus can simulate the pair- and contract-operations of Deep Weisfeiler Leman in the sense that the axiom system $\Piso(G',H')$ is derivable from $\Piso(G,H)$ if there is a sequence of DWL-operations that transforms $G \uplus H$ into $G' \uplus H'$. Together with the simulation of $k$-dimensional Weisfeiler Leman in the degree-$k$ monomial calculus given in \cite{berkholz}, this shows that $\EPC_3$ can distinguish two graphs $G$ and $H$ if they can be distinguished in DWL, and the $\EPC_3$-refutation has the same complexity as the DWL-algorithm. Since DWL-algorithms and CPT-programs mutually simulate each other, this result upper-bounds the graph distinguishing power of CPT by that of $\EPC_3$.\\
	
	This raises the question whether a super-polynomial lower bound for graph isomorphism can be established for $\EPC_3$, preferably for graph classes such as unordered CFI-graphs or multipedes, whose isomorphism problem reduces to a linear equation system and is thus in \ptime. If such a lower bound is found, then by \autoref{thm:PC_mainLowerBound}, we would also have that $\CPT \neq \ptime$. This would be a huge step forward in understanding the limitations of symmetry-invariant computation and thus in the quest for a logic for \ptime.\\
	
	Unfortunately, we do not know how strong the system $\EPC_3$ is, and in particular, if the degree-restriction is a true limitation. It may even be the case that $\EPC_3$ polynomially simulates the \emph{unbounded-degree} extended polynomial calculus. Then it would be as strong as extended Frege because in $\EPC$, the extension variables can encode arbitrary polynomials and thereby arbitrary Boolean circuits. This would make it less useful for proving lower bounds against $\CPT$, as extended Frege lower bounds seem to be out of reach at the moment.\\
	However, \autoref{thm:PC_mainInformal} also asserts that the simulation of $\CPT$ is possible using only extension axioms of a limited form, namely for degree-2 monomials and averaged sums. In this restricted version of $\EPC_3$, the obvious representation of Boolean circuits as polynomials is no longer possible: The Boolean functions $X \land Y, X \lor Y$, and $\neg X$ can naturally be represented as the polynomials $X \cdot Y, X + Y - X \cdot Y$, and $1-X$. When we represent Boolean circuits using extension variables, then each extension variable corresponds to a gate in the circuit. If the only allowed extension axioms are $\frac{}{X_f - f}$ for $f = X \cdot Y$ or $f = \frac{1}{n} \sum_{i=1}^{n^2} X_i$, then the only gates that we can naturally express are AND-gates (with extension axioms of the first type). Neither NOT-gates nor OR-gates can be simulated (directly) by such extension axioms because this requires sums which are not of the form $\frac{1}{n} \sum X_i$. In particular, these extension axioms cannot be applied to polynomials where variables occur with a negative coefficient. Hence, the corresponding circuits are in some sense \emph{monotone}. This is of course no formal proof that $\EPC_3$ with restricted extension axioms is strictly weaker than extended Frege but at least it rules out the natural simulation of Boolean circuits in $\EPC_3$. In total, the success chances of our suggested approach for $\CPT$ lower bounds via proof complexity depend highly on the true power of $\EPC_3$ (with restricted extension axioms), and its relation to unrestricted $\EPC$. Investigating this remains a problem for future work.\\
	
	\paragraph*{Symmetry-invariance of the refutations}
	Actually, our \autoref{thm:PC_mainInformal} could be strengthened more: A simulation of $\CPT$ in $\EPC_3$ is even possible in a certain \emph{symmetry-invariant} fragment of $\EPC_3$. However, it seems tricky to give a precise definition of ``\emph{symmetric} $\EPC_3$'' that is both natural and fits the kind of symmetry we encounter in our $\CPT$-simulation. 
	A neat way to put it would be to say that the set of extension axioms used in a derivation has to be closed under symmetries. With the right definition of ``symmetries'', this is indeed true for the refutation constructed in \autoref{lem:PC_constructionOfNewAxiomSystem}. Namely, whenever an extension variable $X_{f(vw)}$ is introduced, where $v$ and $w$ are new pair- or SCC-vertices, then we introduce it for \emph{all} $(v,w) \in V(G') \times V(H')$ that are new. The corresponding polynomials $f(vw)$ consist of variables that refer to the vertices in the respective pairs or SCCs of $v$ and $w$. The automorphisms of the graphs $G$ and $H$ preserve the colours of all vertex-pairs in the coarsest coherent configuration. Therefore, the set of extension axioms that we introduce in each step of the refutation is closed under the automorphisms of $G$ and $H$. The action of these automorphism groups on the set of variables of $\Piso(G,H)$ is the natural one, i.e.\ if $\pi$ is an automorphism of $G$ and $\sigma$ an automorphism of $H$, then they take $X_{vw}$ to $X_{\pi(v)\sigma(w)}$. This extends naturally to the extension axioms, so for example, if $v$ and $w$ are pair-vertices with $\pair(v) = (v_1,v_2), \pair(w) = (w_1,w_2)$, then the extension axiom $X_{f(vw)} - X_{v_1w_1}X_{v_2w_2}$ is mapped to $X_{f(v'w')} - X_{\pi(v_1)\sigma(w_1)}X_{\pi(v_2)\sigma(w_2)}$, where $v',w'$ are the newly introduced pair-vertices for $(\pi(v_1),\pi(v_2))$ and $(\sigma(w_1),\sigma(w_2))$ (such $v',w'$ must exist because DWL is isomorphism-invariant and introduces new vertices for all pairs with the same colour). So in this sense, the extension axioms used in \autoref{lem:PC_constructionOfNewAxiomSystem} are closed under all automorphism-pairs $(\pi, \sigma) \in \Aut(G) \times \Aut(H)$.\\
	Unfortunately, this does not lead to a \emph{general} definition of symmetric $\EPC_3$ because it depends on the automorphisms of $G$ and $H$, the graphs which are implicitly encoded in $\Piso(G,H)$. When $\EPC_3$ is applied to other polynomial axiom systems, then there might be no graphs ``in the background''. So for a general set of input polynomials $\Pp$, it would be natural to require that the set of extension axioms in a refutation be closed under the automorphisms of $\Pp$ -- these are the permutations of the variables that extend to permutations of the polynomials in $\Pp$. However, this would no longer fit to our derivation from \autoref{lem:PC_constructionOfNewAxiomSystem}: The system $\Piso(G,H)$ in general has more automorphisms than $\Aut(G) \times \Aut(H)$. Namely, $\Piso(G,H)$ contains no information about where the edges and non-edges in $G$ and $H$ actually are; it just relates pairs $(v,v') \in V(G)^2$ with pairs $(w,w') \in V(H)^2$ where $(v,v')$ is an edge and $(w,w')$ is not, or vice versa (Axiom \eqref{isoAxiom3}). Therefore, an automorphism of $\Piso(G,H)$ may swap all edges with non-edges, as long as it does so in both $G$ and $H$ (such examples can be constructed). But the automorphisms of the graphs must preserve edges and non-edges, so such an automorphism of $\Piso(G,H)$ does not correspond to one from $\Aut(G) \times \Aut(H)$. Our constructed refutation is only symmetric with respect to the latter. Thus, our simulation of $\CPT$ in $\EPC_3$ is possible in a way that respects \emph{specific} symmetries of $\Piso(G,H)$, but we do not know if this kind of symmetry-invariance can be formulated independently of the graph isomorphism problem as a general restriction to the proof system $\EPC_3$. Perhaps future research will reveal a more generic way to define symmetrized versions of known proof systems. This could be of independent interest because it might be possible to prove lower bounds for symmetric versions of proof systems for which non-symmetric lower bounds seem to be out of reach.\\
	
	Finally, another question that we have not answered is whether the converse to \autoref{thm:PC_mainInformal} also holds: Is there an algorithm that can find $\EPC_3$-refutations (for graph isomorphism) and can be implemented in $\CPT$? Since $\CPT$ is symmetry-invariant and $\EPC_3$ is not, this seems unlikely. Furthermore, such a proof search algorithm would probably have to be non-deterministic. Thus, the only way to get an exact match in expressive power between the logic and the proof system might be by restricting $\EPC_3$ to a symmetry-invariant fragment and extending $\CPT$ with some kind of non-determinism.
	
	\newpage
	
	\bibliographystyle{plainurl}
	
	\bibliography{references.bib}
		\newpage

		\section{Appendix}
		\label{sec:appendix}

		Here is a full proof of \autoref{lem:PC_equivWLPebble}, which adapts Theorem 2.2 in \cite{sandraSiglog} to the setting where we consider the disjoint union $G \uplus H$ rather than the two graphs separately. Essentially it works as expected with Spoiler's strategy being determined by the refinements made in the iterations of the $2$-\WL-algorithm. Additionally, we have to combine this with a technical insight from \cite{dwl} for handling disjoint unions of connected binary structures.
		\restatePebbleLemma*
		\begin{proof}
			We show the statement by induction on the number of iterations that $2$-dimensional Weisfeiler Leman needs to distinguish $(v,v')$ and $(w,w')$ in the structure $A$. Let us make precise how the $2$-WL algorithm computes $C(A)$ by iteratively refining colourings of $V(A)^2$. In the initial colouring $C_0$, there is only one diagonal colour $R_{\diag}$ with $R_{\diag}(C_0) := V(A)^2$. One colour $R_{\text{cross}}$ is reserved for all crossing pairs, i.e.\ $R_{\text{cross}}(C_0) = (V(G) \times V(H)) \cup (V(H) \times V(G)) $. The remaining pairs are coloured according to their atomic types, so there is one colour for each atomic type of pairs in $V(G)^2 \cup V(H)^2$ that is realised in $A$. The atomic type of a pair $(v,w)$ is the set of relations $R \in \tau$ such that $(v,w) \in R(A)$.
			Note that this initial colouring is not necessarily a coherent configuration: It satisfies all properties from \autoref{def:PC_coherentConfig} except the last one about intersection numbers. In fact, this is the case for all colourings that are computed throughout the iteration, except for the final one, which is stable and a coarsest coherent configuration of $A$. To argue why this resulting configuration is indeed equivalent to $C(A)$, it is important that $A$ is the disjoint union of two connected structures, and therefore, by Lemma 8 in \cite{dwl}, its crossing colours are distinct from its non-crossing colours. Therefore, the choice of our initial colouring $C_0$ will not lead to a stable colouring that is different from $C(A)$.\\
			
			The colouring $C_{i+1}$ is defined from the $\sigma_i$-colouring $C_i$ as follows: Each colour class $R(C)$ is split along the intersection numbers of its pairs with other colour classes. That means $R(C)$ is split into the coarsest possible partition $\{ P_1,...,P_m \}$ such that for each $P_i$ it holds: For all pairs $(u,v) \in P_i$, and all $S_1, S_2 \in \sigma_i$, the number of all $x \in V(A)$ such that $(u,x) \in S_1(C_i)$ and $(x,v) \in S_2(C_i)$ is the same (i.e.\ independent of the chosen pair in $P_i$).\\
			Refining every colour in $\sigma_i$ in this way yields the colouring $C_{i+1}$. We simply enumerate the colours in $\sigma_{i+1}$ and call them $R_1,R_2,...$ and so on, because we do not care about their actual names. This refinement process stops when the colouring is stable and cannot be refined further -- the resulting colouring is equivalent to $C(A)$, the canonical coarsest coherent configuration, as it is the coarsest possible colouring that also satisfies the last condition of \autoref{def:PC_coherentConfig}.\\
			
			For any relation $R \subseteq V(A)^2$, let $\dom(R) := \{x \in V(A) \mid \text{there exists } y \text{ such that }(x,y) \in R \}$. We show the following four statements via induction on the number $i$ of iterations of the refinement procedure:
			\begin{enumerate}[(a)]
				\item For every colour $R \in \sigma_i$, either $R(C_i) \subseteq (V(G) \times V(H)) \cup (V(H) \times V(G))$ or $R(C_i)$ is disjoint from $(V(G) \times V(H)) \cup (V(H) \times V(G))$. In the former case we say that $R$ is \emph{crossing}.
				\item For every colour $R \in \sigma_i$, there exist diagonal colours $D_1,D_2 \in \sigma_i$ such that for every pair $(u,v) \in R(C_i)$, it holds $(u,u) \in D_1(C_i)$ and $(v,v) \in D_2(C_i)$.
				\item Let $v,v' \in V(G), w,w' \in V(H)$ and let $D_1,D_2 \in \sigma_i$ be diagonal colours such that $(v,v), (w,w) \in D_1(C_i)$ and $(v',v'), (w',w') \in D_2(C_i)$. Then $(v,w)$ and $(v',w')$ have the same (crossing) colour.
				%Assume that there exists a bijection $g_{i-1}: V(G) \lra V(H)$ that preserves the $\sigma_{i-1}$-colours of all pairs in $V(G)^2$. Then it holds: For every \emph{crossing} colour $R \in \sigma_i$, there exist diagonal colours $D_1,D_2 \in \sigma_i$ such that $R(C_i) = \dom (D_1(C_i)) \times \dom (D_2(C_i))$. (If such a bijection does not exist, then Spoiler wins the game on $G$ and $H$ from any position by statement (d)).
				\item Let $(v,v') \in V(G)^2, (w,w') \in V(H)^2$ such that $(v,v')$ and $(w,w')$ do not have the same colour in $C_i$. Then Spoiler has a winning strategy for the bijective 3-pebble game on $G$ and $H$ with initial position $\{(v,w),(v',w')\}$.
			\end{enumerate}	 
		
			\textit{Proof:} For $i=0$, (a), (b) and (c) are clear by definition of $C_0$, and (d) is also clear since in the initial colouring, distinct colours mean distinct atomic types. In that case, $\{(v,w),(v',w')\}$ is not a local isomorphism and Spoiler wins immediately.\\
			
			Now consider iteration $i+1$. Statement (a) follows from the inductive hypothesis because the colouring is refined in every step and thus, each crossing colour is always partitioned into crossing colours again, and the same holds for non-crossing colours.\\
			
			Next, we show statement (b). Fix a colour $R \in \sigma_i$ and diagonal colours $D_1,D_2 \in \sigma_i$ such that for every pair $(u,v) \in R(C_i)$, it holds $(u,u) \in D_1(C_i)$ and $(v,v) \in D_2(C_i)$. We have to show: If any of the diagonal colours $D_1(C_i), D_2(C_i)$ are split, then the colour $R$ is split in such a way that statement (b) still holds after iteration $i+1$.
			Assume w.l.o.g.\ that $D_1(C_i)$ is split: Let $(u,u),(u',u') \in D_1(C_i)$ and let $v,v'$ be such that $(u,v), (u',v') \in R(C_i)$. Further, let $S_1, S_2 \in \sigma_i$ such that
			\begin{align*}
				|X | &:= |\{ x \in V(A) \mid (u,x) \in S_1(C_i) \text{ and } (x,u) \in S_2(C_i)  \}|\\
				& \neq |\{ x \in V(A) \mid (u',x) \in S_1(C_i) \text{ and } (x,u') \in S_2(C_i) \}| =: | X' | .
			\end{align*}
			Note that we have $S_2 = S_1^{-1}$.
			Now consider any two pairs $(u,v), (u',v') \in R(C_i)$. Partition $X$ according to the colours of its elements paired with $v$, i.e.\ for any colour $T \in \sigma_i$, let $X_T := \{ x \in X \mid (x,v) \in T \}$. Then the non-empty $X_T$ form a partition of $X$. Similarly, define $X'_T := \{ x \in X' \mid (x,v') \in T \}$. Since $|X| \neq |X'|$, there must exist a colour $T \in \sigma$ such that $|X_T| \neq |X'_T|$. Then for this colour, we have 
			\begin{align*}
				&|\{ x \in V(A) \mid (u,x) \in S_1(C_i) \text{ and } (x,v) \in T(C_i)  \}| = |X_T| \\
				\neq |X'_T| = &|\{ x \in V(A) \mid (u',x) \in S_1(C_i) \text{ and } (x,v') \in T(C_i) \}|.
			\end{align*}
			Thus, the pairs $(u,v), (u',v')$ are in distinct colours after iteration $i+1$, as witnessed by the intersection numbers with the colours $S$ and $T$. Hence, the invariant (b) still holds.\\
			
			We know that statement (c) holds after iteration $i$. In order to show that it still holds after iteration $i+1$, we need to prove that whenever a crossing colour is refined, then at least one of its endpoint-colours is also refined: Fix a crossing colour $R \in \sigma_i$ and two pairs $(u_1,u_2), (u_1',u_2') \in R(C_i)$. By statement (b), we know that there are diagonal colours $D_1, D_2$ such that $(u_1,u_1), (u_1',u_1') \in D_1(C_i)$ and $(u_2,u_2), (u_2',u_2') \in D_2(C_i)$. Now suppose that in iteration $i+1$, the pairs $(u_1,u_2)$ and $(u_1',u_2')$ are separated. Our goal is to show that also $u_1$ and $u_1'$ or $u_2$ and $u_2'$ get distinct diagonal colours because the only way how (c) can fail to be true is if $(u_1,u_2)$ and $(u_1',u_2')$ get distinct colours but their respective first and second entries keep the same diagonal colour as before. 
			So let $S_1, S_2 \in \sigma_i$ be colours that witness the separation of $(u_1,u_2)$ and $(u_1',u_2')$:  	
			\begin{align*}
				|X | &:= |\{ x \in V(A) \mid (u_1,x) \in S_1(C_i) \text{ and } (x,u_2) \in S_2(C_i)  \}|\\
				& \neq |\{ x \in V(A) \mid (u'_1,x) \in S_1(C_i) \text{ and } (x,u'_2) \in S_2(C_i) \}| =: | X' | .
			\end{align*}
			Assume w.l.o.g.\ that $|X| > 0$. Exactly one of the colours $S_1, S_2$ is crossing, and the other is non-crossing. Assume w.l.o.g.\ that $S_1$ is non-crossing and $S_2$ is crossing. Then $X,X' \subseteq V(G).$\\
			\textbf{Claim}: $X = \{ x \in V(A) \mid (u_1,x) \in S_1(C_i) \text{ and } (u_1,x) \in S_1^{-1}(C_i)  \}$.\\
			\textit{Proof of claim:} The inclusion $\subseteq$ is clear. For the inclusion $\supseteq$, we have to show that for every $x \in V(A)$ with $(u_1,x) \in S_1(C_i)$ it holds $(x,u_2) \in S_2(C_i)$. This is true because: The diagonal colour of $(x,x)$ is the same as that of every vertex in $X$, according to statement (b) with respect to $S_1$. Then statement (c) from the induction hypothesis implies that $(x,u_2) \in S_2(C_i)$. This proves the claim.\\
			\\
			Similarly, we can prove $X' = \{ x \in V(A) \mid (u'_1,x) \in S_1(C_i) \text{ and } (u'_1,x) \in S_1^{-1}(C_i)  \}$.\\
			Hence we have
			\begin{align*}
				& |\{ x \in V(A) \mid (u_1,x) \in S_1(C_i) \text{ and } (x,u_1) \in S_1^{-1}(C_i)  \}|\\
				\neq &|\{ x \in V(A) \mid (u'_1,x) \in S_1(C_i) \text{ and } (x,u'_1) \in S_1^{-1}(C_i) \}|.
			\end{align*}
			Therefore, $(u_1,u_1), (u'_1,u'_1) \in D_1(C_i)$ will get distinct diagonal colours after iteration $i+1$, as witnessed by the intersection numbers with $S_1$ and $S_1^{-1}$. If $S_2$ is non-crossing and $S_1$ is crossing, then it is the colour $D_2$ that is refined. This is what we wanted to show, so statement (c) is still true after iteration $i+1$.\\
			
			Finally, we can use this to prove statement (d). Assume that $(v,v') \in V(G)^2$ and $(w,w') \in V(H)^2$ have the same colour in $C_i$ and get distinct colours in $C_{i+1}$. Then there exist colours $R_1, R_2 \in \sigma_i$ such that 
			\begin{align*}
				| X_{vv'} | &:= |\{  x \in V(A) \mid (v,x) \in R_1(C_i) \text{ and } (x,v') \in R_2(C_i)   \}| \\
				&\neq |\{ x \in V(A) \mid (w,x) \in R_1(C_i)) \text{ and } (x,w') \in R_2(C_i)  \}| =: | X_{ww'} |.
			\end{align*}
			We distinguish two cases:\\
			\\
			\textit{Case 1:} $|X_{vv'} \cap V(G)| \neq |X_{ww'} \cap V(H)|$. In this case, Spoiler can play as follows from position $\{(v,w),(v',w')\}$: Let $f : V(G) \lra V(H)$ be the bijection chosen by Duplicator. If $|X_{vv'} \cap V(G)| > |X_{ww'} \cap V(H)|$, then Spoiler chooses some $x \in X_{vv'} \cap V(G)$ such that $f(x) \in V(H) \setminus X_{ww'}$, and if $|X_{vv'} \cap V(G)| < |X_{ww'} \cap V(H)|$, then he chooses $x \in V(G) \setminus X_{vv'}$ such that $f(x) \in X_{ww'} \cap V(H)$. In both cases, the resulting position $\{(v,w),(v',w'), (x,f(x)))\}$ is a winning position for Spoiler by the inductive hypothesis because either in $\{ (v,w), (x,f(x))  \}$ or in $\{ (v',w'), (x,f(x)) \}$, the pebble pairs have distinct colours in $C_i$.\\
			\\
			\textit{Case 2:} $|X_{vv'} \cap V(G)| = |X_{ww'} \cap V(H)|$. In this case, we have $|X_{vv'} \cap V(H)| \neq | X_{ww'} \cap V(G) |$. W.l.o.g.\ assume that $|X_{vv'} \cap V(H)| > 0$. It can be seen that $R_1$ and $R_2$ are crossing colours. By statement (b), there is a diagonal colour $D \in \sigma_i$ such that all vertices in $(X_{vv'} \cap V(H)) \cup (X_{ww'} \cap V(G))$ have the diagonal colour $D$, because these are the second entries of pairs in $R_1(C_i)$. Statement (c) says even more: For every vertex $x \in V(H)$ with $(x,x) \in D$, we have $(v,x) \in R_1(C_i)$ and $(x,v') \in R_2(C_i)$, and for every $x \in V(G)$ with $(x,x) \in D$, we have $(w,x) \in R_1(C_i)$ and $(x,w) \in R_2(C_i)$. Summarising these considerations, we get $X_{vv'} \cap V(H) = \dom (D) \cap V(H)$ and $X_{ww'} \cap V(G) = \dom(D) \cap V(G)$. Thus, we have  
			$|\dom (D) \cap V(H)| \neq | \dom(D) \cap V(G) |$. Then Spoiler wins the game on $G$ and $H$ from any starting position: He can enforce a position $\{  (y,z)   \}$ with $(y,y) \in D(C_i)$ and $(z,z) \notin D(C_i)$ (or vice versa). From there, he wins by the induction hypothesis.\\
			This finishes the inductive proof of (a) -- (d).
			\autoref{lem:PC_equivWLPebble} now follows from statement (d).
		\end{proof}

\end{document}